\title{Resampling-based multi-resolution false discovery exceedance control} 
\author{Jesse Hemerik\footnote{Econometric Institute, Erasmus University, Burg. Oudlaan 50,
3062 PA Rotterdam, The Netherlands. e-mail: hemerik@ese.eur.nl}
}
\theoremstyle{plain}
\newtheorem{theorem}{Theorem}[section]
\newtheorem{proposition}[theorem]{Proposition}
\newtheorem{corollary}[theorem]{Corollary}
\theoremstyle{remark}
\newtheorem{assumption}{Assumption}[section]
\newtheorem{remark}{Remark}[section]
\newtheorem{example}{Example}[section]
\newcommand{\reals}{\mathbb{R}}
 \newcommand{\eqd}{\,{\buildrel d \over =}\,}
\newcommand{\N}{\mathcal{N}}
\newcommand{\F}{\mathcal{F}}
\newcommand{\R}{\mathcal{R}}
\newcommand{\Se}{\mathcal{S}}
\newcommand{\K}{\mathcal{K}}
\newcommand{\M}{\mathcal{M}}
\newcommand{\I}{\mathcal{I}}
\newcommand{\G}{\mathcal{G}}
\newcommand{\X}{\mathcal{X}}
\newcommand{\A}{\mathcal{A}}
\newcommand{\E}{\mathcal{E}}
\newcommand{\Hy}{\mathcal{H}}
\newcommand{\eps}{\epsilon}
\newcommand{\pr}{\mathbb{P}}
\newcommand{\gr}{\mathbb{G}}
\newcommand{\citt}{\citet}
\newcommand{\citp}{\citep}
\newcommand{\fin}{\text{lim}}
\begin{document}
\maketitle

\begin{abstract}
MaxT is a highly popular resampling-based multiple testing procedure, which controls the Familywise Error Rate (FWER) and is powerful under dependence. This paper generalizes maxT to what we term ``multi-resolution'' False Discovery eXceedance (FDX) control. Basic FDX control means ensuring that the FDP ---    the proportion of false discoveries among all rejections --- is at most $\gamma$ with probability at least $1-\alpha$. Here $\gamma$ and $\alpha$ are prespecified, small values between 0 and 1. The proposed method is in addition simultaneous, in the following way: the procedure outputs a single rejection threshold $q$, but ensures that with probability $1-\alpha$,  simultaneously over all stricter thresholds, the corresponding FDPs are also below $\gamma$. In particular, for a small set of hypotheses, the FDP bound is 0, i.e., the FWER is 0. Despite these additional, simultaneous guarantees, our method has power comparable to Romano-Wolf, the most powerful non-simultaneous FDX method. Further, our method is valid under the same assumptions. Thus, this paper shows that FDX methods can often be made simultaneous almost for free. The proposed method can be formulated as an extension of simultaneous approaches such as Hemerik, Solari and Goeman (2019), for the first time allowing for  confidence envelopes with a data-dependent shape --- thus resolving a major limitation of such methods.
\\
\\
\emph{keywords:}  Bootstrap; False discovery exceedance; False discovery proportion;  Permutation; Post hoc inference; Simultaneous inference
 \end{abstract}

\section{Introduction}
%Multiple hypothesis testing issues occur in countless fields. 
%$\mathfrak{G}  \mathscr{G} \mathbb{G}$
Prominent ways of adjusting for multiple testing are  \emph{Familywise Error Rate} (FWER) control and controlling the mean or a quantile of  the \emph{False Discovery Proportion} (FDP) \citp{goeman2014multiple,cui2021handbook}. The FWER is the probability that there are one or more false positive findings. Controlling the FWER means ensuring that this probability is at most some prespecified value $\alpha\in(0,1)$. The FDP is the proportion of false positives among all rejected or selected hypotheses. 

Controlling a quantile of the FDP is usually called \emph{False Discovery eXceedance} (FDX) control, or sometimes simply \emph{FDP control} \citp{van2004augmentation,korn2004controlling,korn2007investigation,romano2005exact, lehmann2005generalizations, romano2006stepup,romano2006stepdown,
romano2007control,guo2007generalized,farcomeni2009generalized,roquain2011type,guo2014further,sun2015false,
delattre2015new,dohler2020controlling}. FDX control means ensuring that the FDP is at most $\gamma$ with probability at least $1-\alpha$, with $\alpha$ and $\gamma\in[0,1)$ prespecified.
There are also  methods that provide simultaneous confidence bounds for FDPs \citp{genovese2006exceedance,goeman2011multiple,rosenblatt2018all,goeman2019simultaneous,hemerik2019permutation,
blanchard2020post,katsevich2020simultaneous,vesely2023permutation,durand2025fast}. These methods can in principle also be used for FDX control, although they are not optimized for it. Indeed, since they provide a range of simultaneous FDP bounds, they do not focus their power on a particular FDP bound $\gamma$, like FDX methods.

 Since in FDX methods both $\alpha$ and $\gamma$ can be chosen, each FDX approach represents a broad range of possible methods. For example, note that when $\gamma=0$, one obtains FWER control.  Further, when $\alpha=0.5$, one ensures that $\pr(FDP\leq \gamma)\geq 0.5$, which means that the \emph{median} of the FDP is kept below $\gamma$ \citp{hemerik2025nonparametric}. This is the same as controlling the mean of the FDP (the \emph{false discovery rate}), except that the median instead of the mean is controlled. This illustrates that FDX methods are very versatile.

\subsection{Open challenges}
Developing FDX methods is challenging, especially when the tested variables are correlated, which can lead to large variability of the FDP.
Indeed, as \citt{delattre2015new} note, when no assumptions on the dependence structure are made whatsoever, existing FDX methods ``are in general too conservative''.
 \citt{lehmann2005generalizations} provide p-value-based FDX control, which is valid under mild assumptions on the dependence structure of the p-values.   However, for many dependence structures, this method is rather conservative. The reason is that this approaches takes into account worst-case scenarios for the dependence structure, instead of adapting to the dependence structure. 
 %The method of \citt[][Thm 1]{katsevich2020simultaneous} --- which provides simultaneous FDP bounds --- has good power, but assumes independence of every null p-value.
 
 It is known that resampling-based multiple testing methods can adapt to the dependence structure in the data and thereby achieve better power than competitors \citp{meinshausen2006false,hemerik2019permutation,blain2022notip,andreella2023permutation}.
 Resampling-based methods can be based on permutation or bootstrap-type tests. Using permutations can lead to exact properties for finite samples, while bootstrap methods are  always asymptotic.
The most popular resampling-based multiple testing method is the maxT procedure \citp{westfall1993resampling,westfall2008multiple}, which controls the FWER.
 The idea behind this procedure is that the $(1-\alpha)$-quantile of the maximum of the test statistics corresponding to the true hypotheses, depends on the dependence structure of the test statistics. Using permutations or bootstrapping, the procedure constructs a rejection threshold that is adapted to this dependence structure.
If our goal is FWER control --- i.e., FDX control with $\gamma=0$ --- then  maxT  is often a good choice \citp{meinshausen2011asymptotic}. 
 
The maxT method is very popular in e.g. neuroimaging \citp{nichols2002nonparametric,eklund2016cluster,white2023controlling}, 
economics \citp{romano2005stepwise,romano2010hypothesis,romano2016efficient,list2019multiple,list2023multiple,
batista2025mobile,deserranno2025promotions,dupas2025informing} and 
omics research \citp{steiss2012permory,terada2015high,rosyara2016software,power2016genome,saffari2018estimation}.
However, often researchers are willing to incur a few false positives if this leads to many more true discoveries. This means that it often worthwhile to take  $\gamma>0$, i.e. to consider general FDX control. A method that allows for this is the procedure in \citt{korn2004controlling} and \citt[][Algorithm 4.1]{romano2007control}. This method is often referred to as the (step-down) Romano-Wolf procedure and is further studied in \citt{delattre2015new}. Romano-Wolf  does not necessarily use resampling, but is usually most powerful when combined with resampling.
The method then uses permutations or bootstrapping to construct a threshold and rejects all hypotheses with test statistics above this threshold. The threshold is adapted to the dependence structure and consequently the procedure is powerful. 
Romano-Wolf is asymptotically valid, and finite-sample valid if the test statistics corresponding to the true hypotheses are independent from the statistics of the false hypotheses --- although usually the method works well without that assumption \citp{delattre2015new}. 
Romano-Wolf is defined very differently from the method that we will propose, but they turn out to have similar power for FDX control (see Section \ref{secsims}) and the new method is valid under the same assumptions. However, the new method additionally provides simultaneous, \emph{multi-resolution} control, as discussed below.

The  method of \citt{hemerik2019permutation} and \citt{andreella2023permutation}  can also be used for resampling-based FDX control, although it is not optimized for it, as explained above. This procedure provides FDP bounds that are simultaneous over multiple rejection thresholds. By choosing the most lenient threshold for which the FDP bound is still below $\gamma$, we can obtain FDX control.
 %Finally, the permutation-based method of \citt{vesely2023permutation} also provides simultaneous FDP bounds and could in theory also be used for FDX control. This method is comparable to \citt{hemerik2019permutation} power-wise but focuses its power on situations where there are many false hypotheses. 
 Table \ref{tab:overview} provides an overview of previously existing resampling-based procedures  that provide confidence on the FDP.

 \begin{table}[h]
\centering
\caption{Overview of existing resampling-based multiple testing methods that provide confidence on the FDP. Like the new method, these procedures are compatible with the permutation and bootstrap tests referenced in this paper.  
%Further, these methods all adapt to the dependence structure in the data,  by using simultaneous resampling.
%They all require a matrix of resampled test statistics as input.
}
\label{tab:overview}
\begin{tabular}{p{7cm} p{7.5cm}}  %|p{3cm}|p{4cm}|p{3cm}|
\hline
Statistical criterion & Recommended method \\
\hline
Familywise error rate (FWER) control & MaxT \citp{westfall1993resampling,romano2005stepwise}  \\[6pt]

%k-FWER control & ...  \\[6pt]
Provide a confidence bound for the FDP, given a rejection threshold  & \citt{hemerik2018false}, which extends \citt{tusher2001significance}  \\[17pt]

False discovery exceedance (FDX) control & \citp[][Algorithm 4.1]{romano2007control}; see \citt{delattre2015new} for more properties \\[6pt]

Simultaneous confidence bounds for all FDPs & The methods in \citt{hemerik2019permutation,blain2022notip,andreella2023permutation} are strongly related. \citt{vesely2023permutation} may be more powerful when there are many signals   \\
\hline
\end{tabular}
\end{table}

\subsection{Contributions of this paper}
This paper presents a resampling-based method that  requires exactly the same input as the maxT method: a matrix of resampled test statistics. The output of the new method is a single real number $q$. Rejecting all hypotheses with test statistics exceeding $q$, guarantees FDX control. In addition, the method guarantees that with probability $1-\alpha$, for all $t\geq q$ the FDP for threshold $t$ is at most $\gamma$. Thus, the method provides a particular type of simultaneous control, which has not been considered before in the literature. This is what we call \emph{multi-resolution FDX control}.
 It means that the method allows ``zooming in'' after seeing the data, or ``drilling down'', as it is called in \citt{rosenblatt2018all} and  \citt{andreella2023permutation}. For example, suppose that 42 test statistics exceed the threshold $q$, i.e., we can reject 42 hypotheses, while controlling the FDX rate. Then, after seeing the data, we can additionally decide to zoom in on e.g. the 20 hypotheses with the largest test statistics. If $\gamma=0.05$, then we  know that among these 20 hypotheses, at most $5\%$, so  one  hypothesis, is true. In fact, we also know this about the top 39 hypotheses, since $5\%$ of $39$ is less than 2. Moreover, we know that among the top 19 hypotheses, none are true.  All these statements will be simultaneously valid with probability at least $1-\alpha$. 
(See Remark \ref{remarksimul} for a related example.) 

Our term ``multi-resolution'' is on purpose similar to the term ``all-resolutions'' used in \citt{rosenblatt2018all}. That paper provides bounds for all FDPs of all sets of hypotheses. This method is usually less powerful than \citt{hemerik2019permutation}, which is in turn less powerful for multi-resolution FDX control than the method proposed here.  
The reason is that  \citt{hemerik2019permutation} is not able to focus its power on a particular FDP bound $\gamma$. This is because that method
uses a confidence envelope with a shape that is not allowed to depend on the data. 
If their \emph{candidate confidence envelopes} depend on the data, their proof  breaks down fundamentally.
The new approach however, can be formulated as a generalization of \citt{hemerik2019permutation} to certain confidence envelopes with data-dependent shapes, as explained in the Supplementary Material.

Further, the new method is more user-friendy than \citt{hemerik2019permutation}, since it involves no tuning parameters, such as $\mathbb{B}$, $\mathbb{T}$ and $s$. Moreover, its results are easier to report. 
 Further, it turns out that the new procedure is strongly connected to maxT in other ways. First of all, like maxT, the new method has a single-step and a sequential version. For $\gamma=0$, these coincide with the single-step and sequential versions of maxT. As we increase $\gamma$ however, the power of the new method strongly increases, since we then allow for a small fraction of false positives. There are also other connections with maxT. In particular, if $\gamma>0$  and the new method rejects fewer than $\gamma^{-1}$ hypotheses, then with probability 1, maxT rejects exactly the same  hypotheses.  However, when maxT rejects at least $\gamma^{-1}$ hypotheses, then the new method usually rejects many more. These connections with maxT are proved in the Supplementary Material. 
%Thus, when there are few signals, the new method tends to reject the same as maxT, and when there are many signals, it tends to reject more. This illustrates that the method performs well in sparse and dense settings. In contrast, the method of \citt{hemerik2019permutation}

Like maxT, the new procedure can be combined with many permutation  and bootstrap tests, many of which are  asymptotically exact \citp{pollard2003resampling,winkler2014permutation,hemerik2020robust,desantis2025inference}.
This means that the new method can be used for testing hypotheses in e.g. generalized linear models with many responses and several nuisance covariates. 
This paper primarily focuses on finite-sample properties, but our method can be combined with such asymptotic tests exactly like maxT \citp[as in e.g.][]{de2025permutation}.
 Our procedure is implemented in the R package \verb|rFDP| \citp{hemerik2025rFDP}. This package takes a general matrix of resampled statistics as input.

The structure of this paper is as follows. In Section \ref{secsetting} we define notation, formulate assumptions and discuss the set $\G$ of transformations, which is a subset of a group $\gr$. Section \ref{secsingle} defines the single-step variant of our  method. The sequential variant is defined in Section \ref{secfullseq}, with Section \ref{secap} defining a faster approximation of that method.  
Sections \ref{secsims} through \ref{secdisc} provide simulation results, analyses of real data and the Discussion. 
The Supplementary Material contains more theoretical results, additional data analyses and proofs.

%\citt[][Algorithm 4.1]{romano2007control}. This method is often referred to as the (step-down) Romano-Wolf procedure and is further studied in \citt{delattre2015new}.

\section{Setting and notation} \label{secsetting}

Consider  data $X$  taking values in sample space $\X$ and hypotheses $\Hy_1,...,\Hy_m$. Just like maxT, our method can be combined with various permutation and bootstrap tests. Hence, the hypotheses could take various forms and may concern basic case-control, linear \citp{winkler2014permutation}, generalized linear \citp{desantis2025inference,de2025permutation} or time series models \citp{romano2005stepwise}.
Consider corresponding test statistics $T_1(X),...,T_m(X)$, taking values in $\mathbb{R}$ or a subset of $\mathbb{R}$. For example, each $T_i(X)$ might be the absolute value of a t-statistic. 
%Often we write $T_i$, which is understood to mean $T_i(X)$. 
Let 
$$\N:=\{1\leq i \leq m: \Hy_i \text{ is true}\},$$
$$\F:=\{1\leq i \leq m: \Hy_i \text{ is false}\},$$  where we assume for convenience that $\N$ is not empty and $m\geq 2$.
%Without loss of generality, we assume that $X$ can be written as $(X^1,...,X^m)$, where $T_i(X)$ depends on $X^i$ only, $1\leq i \leq m$. Thus, $X^i$ represents the data used to test $\Hy_i$.  Every $X^i$ can be a vector, matrix or different type of data. Further, $X^1,...,X^m$ may be mutually dependent. In fact, our permutation approach is designed to adapt to such dependencies. As a shorthand, write $X_{\N}=(X_i: i\in \N)$ and $X_{\F}=(X^i: i\in \F)$.

\subsection{Invariance assumption} \label{secinvas}

For finite-sample validity, we need an invariance assumption. We will also  discuss asymptotic validity, which is possible under weaker assumptions. 

Consider a set $\gr$ of transformations  $g: \X\mapsto \X$, for example a set of maps that permute, rotate or sign-flip data.  We assume $\gr$ is a \emph{group} with respect to composition of maps \citp{hoeffding1952large,hemerik2018exact,dobriban2025symmpi}. Instead of using the full group of e.g. permutations, it can be faster and sometimes even more powerful to use a cleverly chosen subgroup, which is of course also a group \citp{chung1958randomization,koning2024more,koning2024morep}.
 Write $gX$ instead of $g(X)$ for short. 
%Assume that for every $g\in G$, $gX$ can be written as 
%$$gX=(g^iX^i:1\leq i \leq m)$$
%for maps $g^1,...,g^m$ on the sample spaces of $X^1,...,X^m$ respectively. 
Consider the following assumptions, the second of which implies the first.

\begin{assumption} \label{assjointd}
The joint distribution of the test statistics $T_i(gX)$ with $i \in \N$ and $g\in \gr$ is invariant under all transformations $g\in \gr$ of $X$. 
\end{assumption}

\begin{assumption} \label{assjointdextra}
Conditional on $(T_i(X):i\in \F)$, Assumption \ref{assjointd} holds.
\end{assumption}

\begin{comment}
\begin{remark} \label{remarkmainass}
Assumption \ref{assjointd} will  be valid in particular if the following two simpler conditions hold:

\begin{enumerate}
\item $X_{\N}$ and $X_{\F}$ are independent;
\item the joint distribution of the test statistics $T_i(gX)$ with $i \in \N$ and $g\in \gr$ is invariant under all transformations $g\in \gr$ of $X$. 
\end{enumerate}
Here, condition 2.  is satisfied in particular if $(X_i: i\in \N) \eqd (g_iX_i: i\in \N)$ for every $g\in \gr$, i.e., if the part of the data corresponding to $\N$ is invariant under the transformations.
\end{remark}
\end{comment}

Throughout the paper  we assume Assumption \ref{assjointdextra} holds. This is always the case if Assumption \ref{assjointd} holds and 
the data corresponding to $\F$ and $\N$ are independent. This latter assumption is also required for the Romano-Wolf method to have proven finite-sample validity \citp{delattre2015new}. It is worth remarking that in our simulation studies, we found no example where the weaker Assumption \ref{assjointd} was not sufficient. This suggests that either Assumption \ref{assjointdextra} is unnecessary or it may be substantially  weakened. Further, we show in the Supplementary Material that asymptotically we can drop Assumption \ref{assjointdextra}.
To our knowledge, there exists no powerful, finite-sample FDX method that makes no assumption regarding dependencies between the test statistics or p-values. 
%For example, the stepdown FDX method by \citt{romano2007control} is not generally valid; it is known to be valid when the p-values corresponding to $\N$ are independent of those corresponding to $\F$ \citp{delattre2015new}. In that case, Assumption \ref{assjointdextra} is typically also satisfied.
%Romano-Wolf is also known to be valid in the asymptotic setting considered in the Supplementary Material.

%Neither the conditions in Remark  \ref{remarkmainass} nor Assumption \ref{assjointd} are necessary for the methods in this paper to be valid. In particular, in many of the performed simulations there was dependence between $X_{\N}$ and $X_{\F}$, but this never  invalidated FDX control. Further,  condition 1. is empty in case $\N=\{1,...,m\}$.

Assumption \ref{assjointd} is standard in the literature on finite-sample valid permutation-based multiple testing  \citp{meinshausen2006false,goeman2010sequential, hemerik2018false, hemerik2019permutation, blanchard2020post}.  %It holds in particular if the distribution of $X_{\N}$ is invariant under the transformations $g\in G$. 
In  a simple case-control study, for this assumption to hold it is sufficient that the joint distribution of the variables corresponding to $\N$ is the same for the cases as for the controls. Assumption \ref{assjointdextra} would be satisfied if this still holds conditional on the observations corresponding to $\F$.
%This will be satisfied if the mechanisms that make the cases and controls different, affect $X_{\F}$ but not $X_{\N}$.
If we only require asymptotic exactness, then Assumption \ref{assjointd} is not necessary for some resampling-based multiple testing methods \citp{pollard2003resampling}. In particular, there is much literature on bootstrap tests in combination with maxT \citp{pollard2003resampling,romano2005stepwise,romano2010hypothesis,romano2016efficient,list2019multiple,list2023multiple}. Instead of Assumption \ref{assjointd}, one then essentially needs that asymptotically, one can sample from the distribution of the test statistics $T_i(X)$ with $i \in \N$. The new method is then also asymptotically valid. Since our method can be combined with various permutation and bootstrap tests and the method in \citt{hemerik2020robust} and \citt{desantis2025inference}, it can be used for testing in linear and generalized linear models for example.
%If we assume that these are independent from the statistics $T_i(X)$ with $i \in \F$, then we will have asymptotic results analogous to the finite-sample results presented here. 

For finite samples, permutation methods tend to be reasonably robust to violations of Assumption \ref{assjointd}.
Robustness of permutation-based multiple testing methods against violations of Assumption \ref{assjointd} is discussed in \citt{anish}. In unbalanced case-control studies, bootstrap tests may provide more reliable performance than permutation tests \citp{pollard2003resampling}
%They show that in a case-control study where e.g. the  tested variables are strongly dependent for the cases but not for the controls, permutation-based multiple testing methods still tend to be reliable. 
Further, certain  semiparametric approaches, based on e.g. sign-flipping,  that can deal well with heteroscedasticity. These approaches can be combined  with  maxT and likewise with the new method \citp{davidson2008wild, winkler2014permutation,hemerik2020robust,desantis2025inference}. In particular, \citt{de2025permutation} illustrates how to combine sign-flipping of score contributions with maxT; combining this sign-flipping approach with the new method is analogous, since the new method requires the same input as maxT.
Finally, finite-sample and asymptotic robustness of single-hypothesis permutation tests is discussed in e.g. \citt{romano1990behavior, canay2017randomization, kashlak2022asymptotic}.

\subsection{Equivalence classes of transformations}

Sometimes $\gr$ contains subsets of equivalent transformations, which always give the same test statistic. For example, in a case-control study, if we shuffle the controls among themselves, this usually does not change the test statistic \citp[][p.815]{hemerik2018exact}.
This relates to the following assumption, which is typically satisfied whenever the data are continuous. We make this paper throughout the paper and supplement.

\begin{assumption} \label{asscont}
There exists a partition $\gr_1,...,\gr_d$ of $\gr$ with $|\gr_1|=...=|\gr_d|$ such that with probability 1, for all $g, g'\in \gr$ and $1\leq i \leq m$, $T_i(gX)=T_i(g'X)$ if and only if $g$ and $g'$ are in the same set $\gr_j$. Further,  with probability 1, for all $1\leq i, i' \leq m$, $T_i(gX)\neq T_{i'}(g'X)$ if $g$ and $g'$ are in different sets $\gr_j$.
\end{assumption}

The above assumption guarantees in particular that instead of using the full group $\gr$, we can use one transformation from each subset $\gr_j$:  for all methods considered, this will be equivalent to using the full group $\gr$ but faster. 
Let $\G\subseteq \gr$ be such that  it contains exactly one element from each of $\gr_1,...,\gr_d$. Also, assume without loss of generality that $\G$ contains the identity map $id$. 
Whether  $|\G|<|\gr|$, depends on the setting:
for two-group comparisons, typically we will have $|\G|<|\gr|$, \citp[][p.815]{hemerik2018exact}. In case of independence testing of two continuous variables, typically all permutations give different statistics, so that we have $|\gr_1|=...=|\gr_d|=1$, i.e., $\G=\gr$. If using all transformations in $\G$ is computationally infeasible, we could use random transformations as in e.g. \citt{hemerik2018false}.

\subsection{Rejection sets and FDP}

Given data $x\in\X$, the set of rejected hypotheses that we will consider is of the form 
$$\R(t,x)=\{1\leq i \leq m: T_i(x)>t\},$$
 where $t\in\mathbb{R}$.  
 Here $t$ is the \emph{rejection threshold}. 
 %In this paper we allow the rejection threshold to simply take any value in $\reals$. Limiting the threshold to some subset of $\reals$ turns out to not be necessary in the context of this paper, even if the test statistics are limited to some subset of $\reals$.
 Note that the rejection region is simply $(t,\infty)$ for every test statistic.
 This  could straightforwardly be extended to general rejection regions $D_i(t)\subset\mathbb{R}$, where the region $D_i(t)$ shrinks as $t$ increases and $D_i(\cdot)$ potentially depends on the hypothesis index $i$.

For $t\in\mathbb{R}$ and $x\in \X$ we let
\begin{align*}
R(t,x) &= |\R(t,x)|, \\
V(t,x) &=|\N\cap\R(t,x)|,  \\
FDP(t,x) &=    V(t,x)/(R(t,x)\vee 1).  
\end{align*}
Note that $R(t,x)$ is the number of rejections given data $x$. When we write e.g. $R(t)$ or $FDP(t)$, this will be  understood to mean  $R(t,X)$. $V$ is the usual notation for the number of false discoveries in the literature.
Note that the function $t\mapsto FDP(t, x)$ is piecewise constant and right-continuous.
%Assume that with probability 1, the points where the functions $R(\cdot,gX)$, $g\in \G$, have discontinuities, are all distinct. This is typically satisfied if the data are continuous.
Finally, let $\alpha\in(0,1)$ and $\gamma \in[0,1)$. 
%Our aim is False Discovery Exceedance (FDX) control, i.e., the goal is to choose $t$ based on the data in such a way that the procedure satisfies $\pr(FDP(t,X)\leq \gamma)\geq 1-\alpha$.

\section{Single-step procedure} \label{secsingle}
In this section, we define  the \emph{single-step procedure}. This method is uniformly improved by the sequential procedure in Section \ref{secseq}. 
The sequential procedure directly builds on the single-step procedure, which forms its first step.
The Supplementary Material discusses an algorithm for the single-step method.
While we initially define the method based on test statistics, it is also possible to use p-values as input. This is  also discussed in the  Supplementary Material.

The threshold $q$ of the single-step method is defined as follows. Recall the definition of $\G$, which is a set of representative transformations from the group $\gr$.
For every $g\in \G$, let 
\begin{equation} \label{eqdefsg}
s_g=s_g(X) :=\sup\Big\{t\in\mathbb{R}: \frac{R(t,gX)}{R(t,X)\vee 1}>\gamma\Big\}.
\end{equation}
This supremum is always a real number, which follows from the fact that for all small enough $t$, the fraction in \eqref{eqdefsg} is $m/m=1$ and for all large enough $t$, the fraction is $0/(0\vee 1)=0$.
Let $q=q(X)$ be the $(1-\alpha)$-quantile of the values $s_g$, $g\in \G$ --- more precisely, 
$$q := \min\big \{t\in\mathbb{R}: \frac{|\{g\in \G: s_g\leq t\}|}{|\G|}\geq 1-\alpha \big\}.$$

The following theorem --- see the Supplementary Material for proofs --- states that if we use $q$ as the rejection threshold, then we have FDX control. 
The intuition behind this is the following. For  $t\in \reals$ and $g\in \G$, $R(t,gX)$ can be seen as a conservative estimate of the number of false positives $V(t,X)$ \citp[this is the intuition behind the SAM methodology, see][]{tusher2001significance,hemerik2018false}.
Hence,
${R(t,gX)}/{R(t,X)\vee 1}$ can be seen as a conservative estimate of $FDP(t,X)$. This means that $s_g$ can be seen as an estimate of the largest threshold $t$ for which  $FDP(t,X) >\gamma$. Thus, beyond $s_g$, the FDP is estimated to be at most $\gamma$. Finally, to have $(1-\alpha)100\%$ confidence, the method takes the $(1-\alpha)$ quantile of the values $s_g$. The actual reasoning is much more complex; the required proof is much longer than e.g. the proofs in  \citt{hemerik2019permutation}. What makes the argument difficult is that it does not rely on \emph{candidate envelopes} that are independent of the data. This is further explained in the Supplementary Material.

The theorem does not only guarantee FDX control, but ensures multi-resolution FDX control:
with probability $1-\alpha$, simultaneously over all $t\geq q$, we have $FDP(t)\leq \gamma$. %Such uniform control could be termed \emph{simultaneous FDX control}. 
Since the FDP bounds that the procedure provides are all $\gamma$ (roughly speaking), the method allows for relatively easy reporting of results: given $T_1,...,T_m$, if we simply know $q$, then all FDP statements follow from it. When desired, the FDP statements about the sets $\R(t)$, $t\geq q$, can be expanded to \emph{any} subsets of hypotheses using \emph{interpolation}, as already covered in e.g. \citt{blanchard2020post}. Interpolation is also behind e.g. \citt{van2004augmentation} and \citt{farcomeni2009generalized} and  all-resolutions inference methods for neuroimaging \citp{blain2022notip, andreella2023permutation}.

%This is not the case for most other methods providing simultaneous FDP statements; there, the many FDP bounds depend on the set of hypotheses or on the threshold \citp{hemerik2019permutation}.

\begin{theorem} \label{thmmain}
%Assume that $X_{\N}$ and $X_{\F}$ are independent.
%the test statistics $\{T_i(gX): i\in \N,  g\in G\}$  are independent of the test statistics $\{T_i(gX): i\in \F,  g\in G\}$. 
The single-step method provides FDX control:
\begin{equation} \label{eq:weaker}
\pr\big\{FDP(q)\leq \gamma\big\}\geq 1-\alpha.
\end{equation}

In fact, we have the stronger result that
\begin{equation} \label{eq:stronger}
\pr\big\{\forall t\geq q: FDP(t)\leq \gamma\big\}\geq 1-\alpha.
\end{equation}
\end{theorem}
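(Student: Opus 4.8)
The plan is to reduce the simultaneous statement \eqref{eq:stronger} to a single quantile inequality and then to prove that inequality with a resampling/invariance argument that generalizes the familiar FWER proof of maxT. First I would introduce the oracle threshold $s^\ast := \sup\{t\in\reals: FDP(t,X)>\gamma\}$ (with $s^\ast=-\infty$ if the set is empty). Because $t\mapsto FDP(t)$ is piecewise constant and right-continuous, $s^\ast$ is the right endpoint of the last interval on which $FDP$ exceeds $\gamma$, so in fact $FDP(t)\le\gamma$ for every $t\ge s^\ast$. Hence the event in \eqref{eq:stronger} contains $\{q\ge s^\ast\}$, and it suffices to show $\pr(q\ge s^\ast)\ge 1-\alpha$. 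Assumption \ref{asscont} rules out ties, so $s^\ast$ is a well-defined statistic value and these boundary manipulations are clean; this also immediately yields the weaker \eqref{eq:weaker}.

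Next I would convert the quantile inequality into a counting statement. Since $q$ is the $(1-\alpha)$-quantile of $\{s_g:g\in\G\}$, the event $\{q<s^\ast\}$ forces at most an $\alpha$-fraction of the $s_g$ to reach $s^\ast$; contrapositively, $|\{g\in\G:s_g\ge s^\ast\}|>\alpha|\G|$ implies $q\ge s^\ast$. So the goal becomes $\pr\big(|\{g\in\G:s_g\ge s^\ast\}|>\alpha|\G|\big)\ge 1-\alpha$. To lower-bound this count I would use the conservativeness built into $s_g$: picking $t$ just below $s^\ast$, where $V(t,X)>\gamma(R(t,X)\vee 1)$, the inequality $R(t,gX)\ge V(t,X)$ already forces the fraction in \eqref{eqdefsg} above $\gamma$ and hence $s_g\ge s^\ast$. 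Since $R(t,gX)\ge V(t,gX)=|\{i\in\N:T_i(gX)>t\}|$, letting $t\uparrow s^\ast$ gives the clean sufficient condition
\[ N_g:=|\{i\in\N:T_i(gX)\ge s^\ast\}|\ \ge\ N_{id}, \]
where $N_{id}=|\{i\in\N:T_i(X)\ge s^\ast\}|$ is exactly the false-rejection count that drives $s^\ast$. Thus it suffices to prove $\pr\big(|\{g\in\G:N_g\ge N_{id}\}|>\alpha|\G|\big)\ge 1-\alpha$. The intended engine for this is Assumption \ref{assjointdextra}: conditioning on $(T_i(X):i\in\F)$, the family of null-exceedance counts $(N_g)_{g}$ would be exchangeable under the group (its joint law invariant under reindexing $g\mapsto gh$), and for a fixed comparison level the rank of the identity coordinate is sub-uniform, giving $\pr(|\{g:N_g\ge N_{id}\}|\le\alpha|\G|)\le\alpha$, as required. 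When $\gamma=0$ one has $s^\ast=\max_{i\in\N}T_i$, the condition collapses to an exceedance of $\max_{i\in\N}T_i(gX)$ above $s^\ast$, and this recovers exactly the maxT/FWER proof.

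The hard part will be that the comparison level $s^\ast$ is itself a function of the identity data $X=id\cdot X$: it is determined by the crossing of $V(\cdot,X)$ against the barrier $\gamma(R(\cdot,X)\vee 1)$, and $R(\cdot,X)$ enters every $s_g$ as a common denominator. Consequently $N_g=|\{i\in\N:T_i(gX)\ge s^\ast\}|$ is \emph{not} a genuinely exchangeable family over $g$, because the threshold $s^\ast$ privileges the identity, and one cannot simply treat $N_{id}$ as a uniformly random coordinate. This circularity is absent when $\gamma=0$, where the relevant level $\max_{i\in\N}T_i$ and its exceedance count involve no data-dependent barrier. I expect the resolution to exploit that the barrier $\gamma(R(\cdot,X)\vee 1)$ is \emph{common} to all $g$, so that, conditionally, the ordering among the $s_g$ is still governed by the exchangeable paths $t\mapsto |\{i\in\N:T_i(gX)>t\}|$ crossed against a single shared curve, combined with a monotonicity and group-averaging argument; alternatively, the simultaneous form of \eqref{eq:stronger} can be established directly so as never to commit to the single data-dependent level $s^\ast$, which is precisely the route that the sequential and closed-testing versions of the method suggest.
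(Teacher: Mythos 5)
Your reduction is sound up to the last step: $FDP(t)\leq\gamma$ for all $t\geq s^\ast$ by right-continuity, so it suffices to show $\pr(q\geq s^\ast)\geq 1-\alpha$; and your observation that $N_g\geq N_{id}$ (comparing null exceedance counts just below $s^\ast$) forces $s_g\geq s^\ast$ is correct, since then $R(t,gX)\geq V(t,gX)\geq V(t,X)>\gamma(R(t,X)\vee 1)$. But the proof is not complete, and you say so yourself: the family $(N_g)_{g\in\G}$ is \emph{not} exchangeable because the comparison level $s^\ast$ is built from the identity-indexed curve $V(\cdot,X)$, so the standard sub-uniform-rank argument does not apply. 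The two escape routes you sketch (``the barrier is common to all $g$'' plus ``a monotonicity and group-averaging argument'', or ``establish the simultaneous form directly'') are not carried out, and the first one does not by itself break the circularity: the denominator $R(\cdot,X)=V(\cdot,X)+R_{\F}(\cdot,X)$ is common to all $s_g$, but its $V(\cdot,X)$ component still privileges the identity.

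The missing idea in the paper's proof is an auxiliary family that restores exchangeability by making each transformation use \emph{its own} null counts in the denominator: $s^\ast_g=\sup\{t: V(t,gX)/((V(t,gX)+R_{\F}(t,X))\vee 1)>\gamma\}$. Conditional on $(T_i(X):i\in\F)$, the vector $(s^\ast_g)_{g\in\G}$ is a function of the null statistics only, hence genuinely invariant under the group (this is exactly where Assumption \ref{assjointdextra} is used), and the group-invariance quantile argument gives $\pr(s^\ast_{id}\leq q^\ast)\geq 1-\alpha$, where $q^\ast$ is the $(1-\alpha)$-quantile of the $s^\ast_g$ and $s^\ast_{id}$ equals your $s^\ast$. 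The circularity is then resolved by a pointwise comparison lemma (the paper's Step 4): on the event $\E=\{\forall t\geq q^\ast: FDP(t)\leq\gamma\}$, whenever $V(t,gX)-\gamma V(t,gX)>\gamma R_{\F}(t,X)$ for some $t\geq q^\ast$ one must have $V(t,gX)>V(t,X)$, hence $V(t,gX)-\gamma V(t,X)\geq V(t,gX)-\gamma V(t,gX)$, which shows $s^\ast_g\geq q^\ast$ implies $s'_g\geq q^\ast$ (with $s'_g$ the version of $s_g$ having numerator $V(t,gX)$); a counting argument then yields $q\geq q'\geq q^\ast$ on $\E$. Without this auxiliary exchangeable family and the comparison step, your argument stalls exactly at the point you flagged, so as written the proposal has a genuine gap.
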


\begin{remark} \label{remarksimul}
Theorem \ref{thmmain} implies that
$$\pr\big[\bigcap_{t\geq q} \{V(t)\leq \gamma R(t) \}   \big]\geq 1-\alpha.$$
In particular, if we pick $t'\geq q$ such that $R(t')<\gamma^{-1}$, then $\pr\{V(t')=0\}\geq 1-\alpha$. Thus, simultaneously with the FDX control, we obtain FWER control. Because of the simultaneity,  it will hold for instance that
$$\pr\Big[\big\{FDP(q)\leq \gamma \big\}\cap \big\{V(t')=0\big\} \Big]\geq 1-\alpha.$$
As a  specific example, suppose $\gamma=0.1$, $R(q)> 9$ and  there are no ties among the test statistics $T_1(X),...,T_m(X)$. We can then pick $t'\geq q$ such that  $R(t')=9<\gamma^{-1}$. 
Then we know that with probability at least $1-\alpha$, we do not only have  $FDP(q)\leq \gamma$ but also $V(t')=0$.
\end{remark}

%Result \eqref{eq:stronger} states that with large probability, simultaneously over all $t\geq q$, $FDP(t)\leq \gamma$. Note that as we increase $t$, the requirement that $FDP(t)\leq \gamma$ becomes less stringent in a sense, since in practice  $FDP(t)$ has a tendency to decrease in $t$. The new method exploits this in order

\begin{remark}
The new method achieves good power by adapting to the unknown dependence structure in the data, in the following way. Roughly speaking, we repeatedly simulate draws from the joint null distribution of the test statistics. In the new method, the quantities $s_g$, $g\in \G$, are all based on such ``simulated draws''. Taking the $(1-\alpha)$-quantile leads to a confidence bound that is adapted to the dependence structure. In this way the procedure avoids accounting for worst-case scenarios for the dependence structure.
\end{remark}

The threshold $q$ is a simple quantile. The underlying quantities $s_g$, $g\in \G$, however are suprema. It is discussed in the Supplementary Material how these suprema may be conveniently computed in an exact way. Further, the Supplementary Material
provides an equivalent formulation of our method  in terms of p-values instead of test statistics.

Finally, the Supplementary Material shows that asymptotically, Assumption \ref{assjointdextra} is not necessary. As remarked,  Assumption \ref{assjointd} is not necessary either for asymptotic validity, but it should hold in an asymptotic sense: asymptotically we should be able to resample from the joint distribution of the test statistics $T_i(X)$, $i \in \N$. %Further, if we use permutation-type tests, it remains essential asymptotically that the permutations are sampled from a group in the algebraic sense.

\section{Sequential method} \label{secseq}

For several multiple testing approaches, e.g, Bonferroni  \citp{holm1979simple}, there exists a single-step version and a sequential version. Other examples are the  maxT method, its generalization to $k$-FWER control \citp[][Algorithm 2.1]{romano2007control} and the method in \citt{hemerik2019permutation}. Sequential versions of multiple testing methods are usually uniformly more powerful than the corresponding single-step versions.
The Romano-Wolf FDX method \citp[][Algorithm 4.1]{romano2007control} is always sequential. That method  uses a $k$-FWER method in every step. When a sequential $k$-FWER procedure is used within every step, then Romano-Wolf is thus doubly sequential. It is worth noting that this does not apply to the proposed  method, whose construction is very different from Romano-Wolf.

In Section  \ref{secfullseq} we provide a sequential method that uniformly improves the single-step method of Theorem \ref{thmmain}. The sequential procedure produces a threshold $q_{\fin}\leq q$, which is potentially strictly smaller than $q$, which means that it can lead to more rejections. We show that if we replace $q$ by $q_{\fin}$ in Theorem \ref{thmmain}, then the theorem still holds. 
Compared to the  famous  sequential methods Bonferroni-Holm and maxT,  the sequential version of our method is relatively complex and computationally intensive.
In Section \ref{secap} we provide a faster method that approximates the full sequential method.

\subsection{Full sequential method} \label{secfullseq}
Before we formally state the result in Theorem \ref{thmseq}, we explain the idea behind the construction of $q_{\fin}$.
Firstly, recall that our single-step method  rejects all hypotheses with $T_i(X)>q$. 
However, the proof of Theorem \ref{thmmain} also defines a potentially strictly smaller threshold $q'\leq q$. That proof shows that if we substitute $q'$ for $q$ in Theorem \ref{thmmain}, then the result still holds true. However, computing $q'$ requires knowledge of $\N$, so we cannot know it. However, we can potentially close in on $q'$.

Indeed, in  the proof of Theorem \ref{thmmain} an event $\E$ was defined which satisfies $\pr(\E)\geq 1-\alpha$. 
We showed that under $\E$, $FDP(q)\leq \gamma$.  This means that under $\E$, there are at least $B_1:=\lceil (1-\gamma)R(q)\rceil $ false hypotheses among the hypotheses with indices in $\R(q)$. Thus, under $\E$, we can choose a set 
$\I\subseteq\{1,...,m\}$ with $\I^c\subseteq  \R(q)$ and $|\I^c| = B_1$, such that $\N\subseteq\I$. Here  $\I^c$ stands for $\{1,...,m\}\setminus\I$. We cannot pinpoint such a set $\I$ but we know it exists, under $\E$.

For such a set $\I$, we know that for every $g\in \G$ and $t\in \reals$,
$$|\I\cap \R(t,gX)| \geq |\N\cap \R(t,gX)| = V(t,gX)$$
and hence 
$$s_{g,1}^{\I}:= \sup\Big\{t\in\reals:  \frac{|\I\cap \R(t,gX)|}{R(t,X)\vee 1}>\gamma\Big\} \geq s_{g}' :=\sup\Big\{t\in\reals:  \frac{V(t,gX)}{R(t,X)\vee 1}>\gamma\Big\}.$$
Recall from the proof of  Theorem \ref{thmmain}  that $q'$ is defined as the $(1-\alpha)$-quantile of the values $s_g'$, $g\in \G$.
Hence, we know that 
$q^{\I}_1\geq q'$, where $q'$ is the $(1-\alpha)$-quantile of the values $s_{g}'$, $g\in \G$.     
%By the proof of Theorem \ref{thmmain}, we know that under $\E$, $q'\geq q^*$, which implies that $q^{\I}_1 \geq q^*$. The threshold $q^*$ was defined in the proof of Theorem \ref{thmmain} and could be validly used --- if it were known --- within the single-step method, so that $q^{\I}_1$ can also be validly used. 
We do not know the set $\I$, but we can instead use the maximum over all such sets $\I$, i.e., we use
$$q_1   = \max\big\{ q^{\I}_1 :  \I\subseteq\{1,...,m\}, \I^c\in \R(q) \text{ and } |\I^c| = B_1   \big\}.$$
This threshold can be computed and satisfies $q\geq q_1$. 
It can be validly used, i.e., we may replace $q$ by $q_1$ in Theorem \ref{thmmain}.  
Now we can repeat the above process, but starting from $q_1$ instead of $q$. Continuing like this, we find a sequence $q\geq q_1\geq q_2\geq ...$ of thresholds that can be computed and validly  used. This sequence converges after a finite number of steps and we take $q_{\fin}$ to be the limit. The following theorem states that this final threshold $q_{\fin}$ can indeed be used. A formal proof is provided in the Supplementary Material.

%The construction of $q^{\fin}$ is also defined in the form of an algorithm in ........ .

\begin{theorem} \label{thmseq} 
 Let $q_0=q$.   For  $i \in \{1,2,...\}$, sequentially define $q_{i}$ based on $q_{i-1}$ as follows. Let
$B_{i}=\lceil (1-\gamma)R(q_{i-1})\rceil $ and write 
$$\K_i = \big\{  \I\subseteq\{1,...,m\}: \I^c\subseteq \R(q_{i-1}) \text{ and } |\I^c| = B_i    \big\}.$$
For $\I\in \K_i$ let
$$s_{g,i}^{\I}= \sup\Big\{t\in\reals:  \frac{|\I\cap \R(t,gX)|}{R(t,X)\vee 1}>\gamma\Big\}$$
and denote by $q^{\I}_i$  the $(1-\alpha)$-quantile of the values $s_{g,i}^{\I}$, $g\in \G$. %Note that $q^{\I}_i \leq q_{i-1}$.
Define 
\begin{equation} \label{defqi}
q_i   = \max\big\{ q^{\I}_i :  \I\in\K_i   \big\}.
\end{equation}
Then $q\geq q_1\geq q_2\geq...$
 and from some $i\in \mathbb{N}$ we have $q_i=q_{i+1}=...$ . Define $q_{\fin}$ to be this final value. Then Theorem \ref{thmmain} still holds if we replace $q$ by $q_{\fin}$. 
\end{theorem}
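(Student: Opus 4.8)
The plan is to establish three things: that the sequence $q_0\ge q_1\ge q_2\ge\cdots$ is non-increasing, that it stabilizes after finitely many steps so that $q_{\fin}$ is well defined, and that the guarantees \eqref{eq:weaker} and \eqref{eq:stronger} continue to hold with $q_{\fin}$ in place of $q$. I would treat the validity statement first, since it is the conceptual core and, as it turns out, can be argued independently of the monotonicity. From the proof of Theorem \ref{thmmain} I would import three ingredients: the threshold $q'\le q$, defined as the $(1-\alpha)$-quantile of the values $s_g'=\sup\{t\in\reals: V(t,gX)/(R(t,X)\vee 1)>\gamma\}$; the event $\E$ with $\pr(\E)\ge 1-\alpha$; and the fact, shown there, that under $\E$ we have $FDP(t)\le\gamma$ for all $t\ge q'$.

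The validity argument is an induction showing that, on the event $\E$, $q_i\ge q'$ for every $i$. The base case $q_0=q\ge q'$ is immediate. For the inductive step, assume $q_{i-1}\ge q'$ on $\E$. Then $FDP(q_{i-1})\le\gamma$, so the number of false hypotheses in $\R(q_{i-1})$ is at least $(1-\gamma)R(q_{i-1})$, hence at least $B_i=\lceil(1-\gamma)R(q_{i-1})\rceil$ because it is an integer. Consequently there exists $\I\in\K_i$ with $\I^c\subseteq\F$, i.e.\ $\N\subseteq\I$. For such $\I$ we have $|\I\cap\R(t,gX)|\ge |\N\cap\R(t,gX)|=V(t,gX)$ for all $g$ and $t$, so $s_{g,i}^{\I}\ge s_g'$ pointwise in $g$; since the $(1-\alpha)$-quantile is monotone, $q_i^{\I}\ge q'$, and therefore $q_i=\max_{\J\in\K_i}q_i^{\J}\ge q_i^{\I}\ge q'$. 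Note that this step never uses monotonicity of the sequence. Taking limits, $q_{\fin}\ge q'$ on $\E$, so on $\E$ we have $FDP(t)\le\gamma$ for all $t\ge q_{\fin}\ge q'$; this yields \eqref{eq:stronger}, and \eqref{eq:weaker} follows by taking $t=q_{\fin}$.

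For the monotonicity I would first record that $s_{g,i}^{\I}$ in fact does not depend on $i$, only on $\I$; writing $q^{\I}$ for its quantile, one checks that $q^{\I}$ is non-decreasing in $\I$ with respect to inclusion and that $q_i=\max_{\I\in\K_i}q^{\I}$. The first descent $q_1\le q_0$ is easy, since $|\I\cap\R(t,gX)|\le R(t,gX)$ gives $q^{\I}\le q$ for every $\I\in\K_1$. The general step $q_{i+1}\le q_i$ I would attempt by an exchange argument: taking a maximizer $\I_0$ for step $i+1$, I would try to build $\J\in\K_i$ with $\J\supseteq\I_0$, so that $q^{\J}\ge q^{\I_0}=q_{i+1}$, by selecting $\J^c$ as a size-$B_i$ subset of $\I_0^c\cap\R(q_{i-1})$. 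Stabilization would then come essentially for free: by Assumption \ref{asscont} every $s_g^{\I}$, and hence every $q_i$, lies in the finite set $\bigcup_g\M_g$ with probability one, and a non-increasing sequence in a finite set is eventually constant.

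The \emph{main obstacle} is precisely this general monotonicity step. The exchange succeeds only when $|\I_0^c\cap\R(q_{i-1})|\ge B_i$, and this can fail: as the threshold drops from $q_{i-1}$ to $q_i$, the newly rejected indices $\R(q_i)\setminus\R(q_{i-1})$ can absorb much of the enlarged budget $\I_0^c$, while the budget only grows by $B_{i+1}-B_i=\lceil(1-\gamma)R(q_i)\rceil-\lceil(1-\gamma)R(q_{i-1})\rceil$, which may be strictly smaller than $R(q_i)-R(q_{i-1})$. When this happens one cannot simply shrink $\I_0^c$ into $\R(q_{i-1})$. I expect the resolution to require showing that a maximizer can be chosen to discard previously rejected indices in preference to newly rejected ones (so that $\I_0^c\subseteq\R(q_{i-1})$ whenever cardinality permits), or, more robustly, to recast the problem as monotonicity of the operator $\phi(u):=\max\{q^{\I}:\I^c\subseteq\R(u),\,|\I^c|=\lceil(1-\gamma)R(u)\rceil\}$, proved by an augmenting-type swap that trades a removed new index for a removed old index without decreasing the quantile. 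Combined with $\phi(q)\le q$, operator monotonicity then gives $q_{i+1}=\phi(q_i)\le\phi(q_{i-1})=q_i$ by induction.
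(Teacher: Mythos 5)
Your validity argument --- the induction showing that on $\E$ one has $q_i\geq q'$ for every $i$, via the existence of some $\I\in\K_i$ with $\N\subseteq\I$ and the pointwise bound $s_{g,i}^{\I}\geq s_g'$ --- is exactly the paper's proof of this theorem, down to the same use of the event $\E$ and the oracle threshold $q'$ imported from the proof of Theorem \ref{thmmain}. Your additional observation that this induction nowhere uses monotonicity of the sequence, so that any value $q_{\fin}$ agreeing with some $q_i$ (or with $\inf_i q_i$) inherits the simultaneous guarantee, is correct and is implicitly the structure of the paper's argument as well.

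On the monotonicity and stabilization claims, you have identified a real issue rather than created one. The paper does not prove $q\geq q_1\geq q_2\geq\cdots$ either: its proof simply asserts that $B_1\leq B_2\leq\cdots$ is a bounded discrete sequence and concludes stabilization of the $q_i$, but $B_i\leq B_{i+1}$ already presupposes $q_{i-1}\geq q_i$ (since $B_i=\lceil(1-\gamma)R(q_{i-1})\rceil$ and $R(\cdot)$ is non-increasing), and stabilization of the $B_i$ does not by itself force stabilization of the $q_i$ because $\K_{i+1}$ also depends on the set $\R(q_i)$. Your diagnosis of why the naive exchange argument fails is sound: for $\I_0\in\K_{i+1}$ one can only shrink $\I_0^c$ into $\R(q_{i-1})$ when $|\I_0^c\cap\R(q_{i-1})|\geq B_i$, and the budget increment $B_{i+1}-B_i$ can indeed fall short of $R(q_i)-R(q_{i-1})$. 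So the step $q_{i+1}\leq q_i$ for $i\geq 1$ remains unestablished in both your write-up and the paper's; only $q_1\leq q_0$ is immediate (via $|\I\cap\R(t,gX)|\leq R(t,gX)$), and without monotonicity a deterministic recursion on the finite set $\bigcup_g\M_g$ could in principle cycle rather than stabilize. In short: your proof of the substantive conclusion (FDX control with $q_{\fin}$) matches the paper and is complete modulo the well-definedness of $q_{\fin}$; the part you flag as the main obstacle is a gap in the paper's own proof, not a defect specific to your attempt.
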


In the sequential method, in every step $i$ we need to compute $s_{g,i}^\I$ for every $\I \in \K_i$ and $g\in \G$. It is discussed in the Supplementary Material how this can be done in a convenient way.

 \subsection{Approximation of the sequential method} \label{secap}
In step $i=1$ of the sequential method, we must compute compute $q_1$, which can be computationally intensive, since for $i\geq 1$, $q_i$ is a maximum over all sets $\I\in \K_i$, where 
 $$|\K_i| = \binom{R(q_{0})}{B_i} = \binom{R(q_{0})}{ \lceil (1-\gamma )R(q_{0})\rceil }.$$
 Assuming one takes $0<\gamma\leq 0.5$,  it may be computationally infeasible to perform this  step when $\gamma R(q_{0})$ is large. As a practical solution, rather than computing a maximum over all $\I\in \K_1$ in step 1, we can compute the maximum over some randomly chosen subset of $\K_1$, say $\Se_1 \subseteq{\K_1}$.
 This leads to an approximation $\hat{q}_1 $, which is smaller than or equal to $q_1$.
 
 In steps $2, 3,...$ we can proceed analogously. More precisely, take $\hat{q}_{0}=q$, let $i\geq 1$  and suppose we have just 
 computed $\hat{q}_{i-1}$. 
  Then, we let $\hat{B}_i  = \lceil (1-\gamma)R(\hat{q}_{i-1}) \rceil$ and let $\Se_i$ be a random subset of  
$$ \hat{\K}_i := \big\{  \I\subseteq\{1,...,m\}: \I^c\in \R(\hat{q}_{i-1}) \text{ and } |\I^c| = \hat{B}_i    \big\}.$$
More precisely, the set $\Se_i$ collects $M$  uniform draws from  $\hat{\K}_i$, where $M$ is some large user-specified number. The draws could be defined to be with or without replacement; in this paper we draw with replacement.
Then, for every $\I\in \Se_i$, we compute
$$\hat{s}_{g,i}^{\I}= \sup\Big\{t\in\reals:  \frac{|\I\cap \R(t,gX)|}{R(t,X)\vee 1}>\gamma\Big\}$$
and denote by $\hat{q}^{\I}_i$  the $(1-\alpha)$-quantile of the values $\hat{s}_{g,i}^{\I}$, $g\in \G$. Then, we take 
$$\hat{q}_i:= \max\{\hat{q}^{\I}_i: \I\in \Se\}.$$
We compute $\hat{q}_1, \hat{q}_2,...$ in this manner until we reach a step where the threshold no longer decreases. Alternatively, we could stop after a predetermined number of steps. In either case, we then  define  $\hat{q}_{\fin}$ to be the last obtained threshold, which serves as our approximation of the threshold $q_{\fin}$ from Section \ref{secseq}.

This defines the \emph{approximation method}, which is also given Algorithm \ref{a:ap}.  This algorithm uses a computational strategy analogous to that of the single-step method, see section B of the  Supplementary Material. 
 The algorithm uses all transformations $g\in \G$, but to further increase speed these may be replaced by a random sample from $\G$, as already mentioned in Section \ref{secsetting}. The method is implemented in the R package \verb|rFDP| \citp{hemerik2025rFDP}.

\begin{algorithm}[ht!] 
 %\doublespacing
\caption{Approximation of the sequential method}
\begin{algorithmic}  \label{a:ap}
\STATE $q \gets$ result of single-step method (see Algorithm 1 in Supplementary Material)
\STATE $\hat{q}_0 \gets q$  
\STATE $i \gets 1$  
\WHILE{$i=1$ OR $\hat{q}_{i-1}<\hat{q}_{i-2}$}  %checken aan het eind
	\STATE $\hat{B}_i \gets \lceil (1-\gamma)R(\hat{q}_{i-1}) \rceil$
	\STATE $\Se_i \gets$ a large random subset of $\big\{  \I\subseteq\{1,...,m\}: \I^c\in \R(\hat{q}_{i-1}) \text{ and } |\I^c| = \hat{B}_i    \big\}$
	\FORALL{$\I\in \Se_i $}
		\FORALL{$g\in \G$}
		\STATE $\M_g^{\I} \gets \big\{T_i(gX): i\in \I \big\}\cup \big\{T_i(X): 1\leq i \leq m\big\}.$
		\STATE $\hat{s}_{g,i}^{\I-} \gets  \max\{t\in \M_g^{\I}: \frac{| \I\cap \R(t,gX)    |}{R(t,X)\vee 1}  >\gamma \}$
		\STATE $\hat{s}_{g,i}^{\I} \gets  \min\{t\in \M_g^{\I}: t > \hat{s}_{g,i}^{\I-}\}$
		\ENDFOR
		\STATE $\hat{q}_i^{\I} \gets$ $(1-\alpha)$-quantile of the values $\hat{s}_{g,i}^{\I}$, $g\in \G$
	\ENDFOR
	\STATE $\hat{q}_i   \gets \max\big\{ \hat{q}^{\I}_i :  \I\in\Se_i    \big\}$
	\STATE $i \gets i+1$
\ENDWHILE
\STATE $\hat{q}_{\fin} \gets \hat{q}_{i-1}$
\RETURN $\hat{q}_{\fin}$
\end{algorithmic}
\end{algorithm}

\section{Simulations} \label{secsims}
We performed simulations to compare the new method with the  most relevant competitors, which are known to provide powerful FDX control: the most powerful method among the two FDX procedures in \citt{lehmann2005generalizations} (which is uniformly more powerful than \citealp{romano2006stepdown}); the resampling-based  method  \citt[][Algorithm 4.1]{romano2007control}; the resampling-based procedure from  \citt{hemerik2019permutation} and the method from \citt{katsevich2020simultaneous}.
We pick these methods because they are known to be more powerful than various other competitors.
It is worth remarking that the method in \citt{andreella2023permutation} is directly based on \citt{hemerik2019permutation}; a consequence is that these procedures have exactly the same power for FDX control. The method in \citt{rosenblatt2018all} typically has lower power than \citt{andreella2023permutation}. The method in \citt{blain2022notip} is the same as \citt{andreella2023permutation}, except that it uses additional external data to find a potentially more powerful set of candidate envelopes.

%, see e.g. the simulation results in  \citt{hemerik2019permutation}.

%The last two methods owe their good power to the fact that they adapt to the dependence structure in the data, like our new method. 

Details on the methods used are provided below. All methods provided valid FDX control in our simulation settings, and the new method provided valid multi-resolution FDX control. Hence, below, we focus on power comparisons.

\subsection{Simulation settings and competitors} \label{secsimsetting}
We simulated  two-group data with a total sample size of 20 and $m$ variables observed per individual.
The data were a $20$-by-$m$ matrix, with the first 10 rows representing the first group and the last 10 rows the second group.
The rows were independent of each other, and in each row there was a homogeneous correlation $\rho$ between the variables. For other correlation structures, the relative performances of the methods were comparable to those shown in the figures below.
 The observations were standard normally distributed, except that  in the columns corresponding to the false hypotheses, we added a signal $d>0$ to the observations of the first group. Note that normality is not at all required for our method to be exact and is just used as an example. 
 The null hypotheses were $\Hy_1,...,\Hy_m$, where $\Hy_i$ is the hypothesis that the observations in column $i$ are identically distributed. The corresponding test statistics $T_1,...,T_m$ were absolute values of two-sample t-statistics. We took $\alpha=\gamma=0.1$.
Besides $\rho$, we varied $\pi_0$, which is the number of true hypotheses divided by $m$.

 The first competitor is the  most powerful method among the two FDX procedures in \citt{lehmann2005generalizations}, which we will simply refer to as Lehmann-Romano.
This method takes a vector of p-values as input and is valid under mild assumptions on their dependence structure. The p-values were based on two-sided t-tests. The Lehmann-Romano method is implemented in the R package  \verb|FDX| \citp{dohler2024FDX}.

 The second competitor is the resampling-based Romano-Wolf method \citp[][Algorithm 4.1]{romano2007control}. The k-FWER method used within the Romano-Wolf procedure was the usual k-FWER generalization of the  maxT method \citp{romano2006stepdown}. Like our method, this procedure is based on permutation or bootstrap tests. In this case, permutations of the group labels were used.
 For computational feasibility, we sampled 50 random permutations in each simulation. Using more permutations would increase computation times and would barely improve power in our experience \citp[also see e.g.][]{marriott1979barnard}. 
 
 %Using the sequential k-FWER method within the Romano-Wolf procedure is also possible, but computationally demanding. For $m=100$ we did run that procedure and this led to small power gains (comparable to those shown in Figure \ref{fig:plots_sequential_m100_legend}. We did not add those results in Figure \ref{fig:plots_sequential_m100_legend}, too avoid showing too many plots in one figure.)

The third competitor is the approach in \citt{hemerik2019permutation}, which also uses resampling to adapt to the dependence structure. Within this method we also used 50 random permutations.
%Moreover, our methods control the FDX rate regardless of the number of random transformations \citp[cf.][]{hemerik2018false}.
The procedure provides simultaneous confidence bounds for the FDP for a range of p-value thresholds $t\in\mathbb{T}\subseteq[0,1]$. The method is theoretically compared to the new procedure in the Supplementary Material. We used the method of \citt{hemerik2019permutation}  for FDX control by choosing the largest $t\in\mathbb{T}$ for which the bound for the FDP is at most $\gamma.$ The method depends on multiple prespecified tuning parameters, namely $\mathbb{T}$ and the set $\mathbb{B}$ of so-called candidate envelopes, which is in fact an infinite-dimensional parameter. Here, we chose the tuning parameters in such a way that this method performed well for FDX control in our simulation settings. We did this to give this competitor a fair change against the new method. Specifically, we took $\mathbb{T}=[0,0.05]$ and took $\mathbb{B}$ to be the set of envelopes defined in the first paragraph of Section 2.6 of \citt{hemerik2019permutation}, with $\delta=0.001$. We observed that making $\delta$ much larger or smaller, decreased the power of that method. Moreover, other choices for $\mathbb{T}$ did not improve its power.

The last competitor is the method from \citt[][Thm 1]{katsevich2020simultaneous}, which also provides simultaneous FDP bounds. This method is elegant and fast, but is based on the assumption that every null p-value is independent of all other p-values.

\subsection{Fast  methods} \label{secsimssingle}
The new method, Romano-Wolf and \citt{hemerik2019permutation} each come in the form of a fast version and a version that is computationally intensive, especially when $m$ is large.
We first considered the fast versions, in simulations with $m=500$. 
In Section \ref{secsimsseq} we consider the computationally intensive versions.
For the new method and \citt{hemerik2019permutation}, the fast versions are  simply the single-step  versions. By the fast version of Romano-Wolf we mean that the \emph{single-step} resampling-based k-FWER method \citp[][Section 2.1]{romano2007control}   is used within Romano-Wolf. This version of Romano-Wolf is still sequential, in the sense that it sequentially applies single-step k-FWER methods \citp[see][Algorithm 4.1]{romano2007control}.

  In Fig. \ref{fig:plots_alpha0p1_legend} the simulation results can be seen for $m=500$ and varying $\rho$ and $\gamma$.
The figure shows the power for FDX control of the method from Section \ref{secsingle} and the competitors --- where \emph{power} is defined as the average fraction of the false hypotheses that was rejected.  
 We see that the Romano-Wolf method had slightly better ``power'' than the new method --- if we ignore that the new method provides many additional confidence statements.  The present power analysis does not take into account that the new method provides simultaneous conclusions, which makes Romano-Wolf seem slightly superior, if we only look at the figure. 
 
  We further see that Lehmann-Romano had the lowest power, especially under dependence. After that, the method from  \citt{hemerik2019permutation} often had the lowest power.
 It should also be added that the method from \citt{hemerik2019permutation} usually did not provide any FWER rejections, meaning there was no set of rejections for which the FDP bound was 0. For our method on the other hand,  the top $\gamma^{-1}-1$ rejections were always FWER rejections. Moreover, the novel method was more convenient since we did not need to specify tuning  parameters. Choosing good tuning parameters for \citt{hemerik2019permutation} is especially problematic if one does not know a priori whether there are few or many signals. %Our new method performs well regardless of how many signals there are.
 
 The power of \citt{katsevich2020simultaneous} was comparable to the new method in some settings. Note  however that the method of \citt{katsevich2020simultaneous} is based on the assumption that every null p-value is  independent. Consequently,  the method does not provide generally valid simultaneous bounds. Moreover, that method had relatively low power when $m$ was smaller, as illustrated in the next section.

\begin{figure}[ht!] 
\centering
  \includegraphics[width=\linewidth]{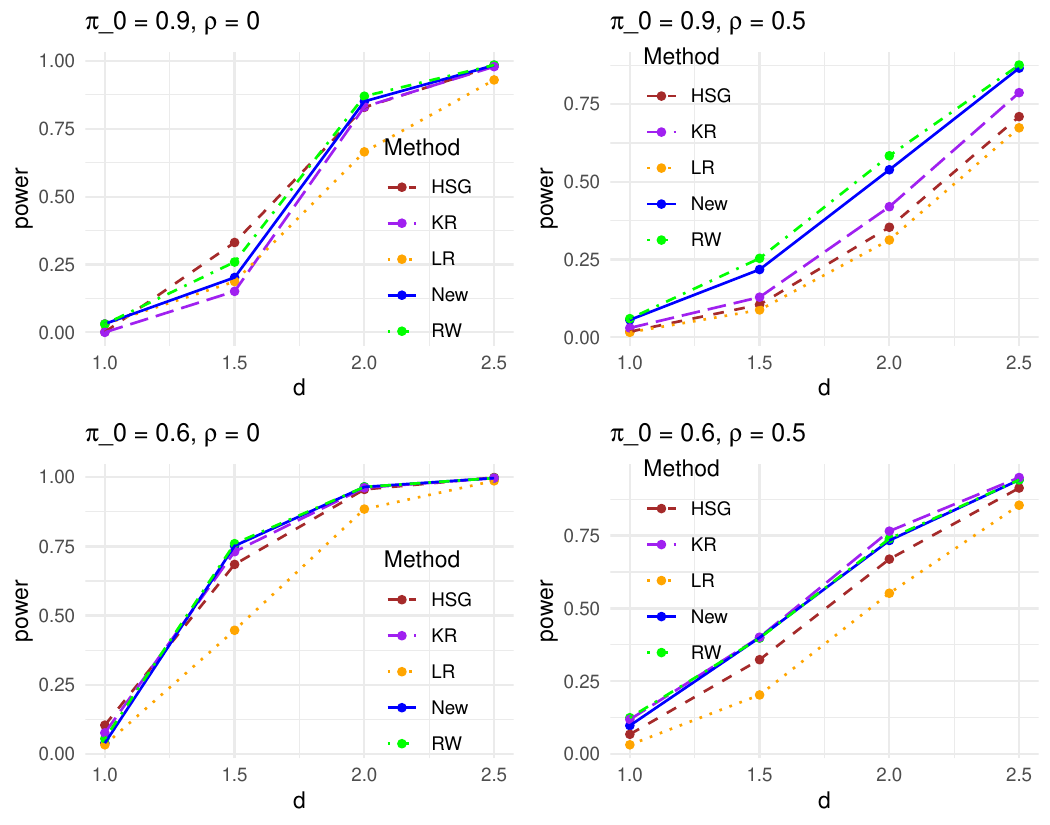}
  \caption{The power of the new method (``New'') versus Lehmann-Romano (``RS''), \citt{hemerik2019permutation} (``HSG''),  Romano-Wolf (``RW'') and Katsevich-Ramdas (``KR''), for $m=500$ and varying $\pi_0$, $\rho$ and signal size $d$. Every estimate is based on $10^3$ repeated simulations.}
 \label{fig:plots_alpha0p1_legend}
\end{figure}

\subsection{Computationally intensive methods}  \label{secsimsseq}
We now illustrate the performance of the computationally intensive versions of the methods, for $m=100$. We also include the method of \citt{katsevich2020simultaneous} for comparison.  Apart from that method, we show results for six methods: the fast and the intensive versions of the new method, \citt{hemerik2019permutation} and Romano-Wolf. The intensive, i.e., sequential versions of the new method, \citt{hemerik2019permutation} and Romano-Wolf
 are often computationally infeasible but can be approximated using random combinations (see Section \ref{secap}), which is what was done here, with the number of combinations  in each step set to 25. Using random combinations in Romano-Wolf means  that in step 2 and onwards of the sequential k-FWER method, we do not check all combinations but 25 randomly chosen ones.
Taking the number of combinations   higher than 25 did usually not yield a different result in these methods. The sequential method from \citt{hemerik2019permutation} further requires choosing an additional tuning parameter $s$, which we took to be $0.005$ as in the simulations of \citt{hemerik2019permutation}.

The results are in Fig. \ref{fig:plots_sequential_m100_legend_RW}.   We took $\pi_0=0.6$, since  for $\pi_0=0.9$, sequential multiple testing methods only provide marginal improvements over their single-step variants. The results illustrate that the sequential variant of each method improves its single-step version. As in Section \ref{secsimssingle},
 Romano-Wolf had slightly better ``power'' than the new method, but the new method provides additional, simultaneous guarantees. In other words, the new method is more powerful when we start zooming in, as already discussed in e.g. the Introduction and Section \ref{secsingle}.

\begin{figure}[ht!] 
\centering
  \includegraphics[width=\linewidth]{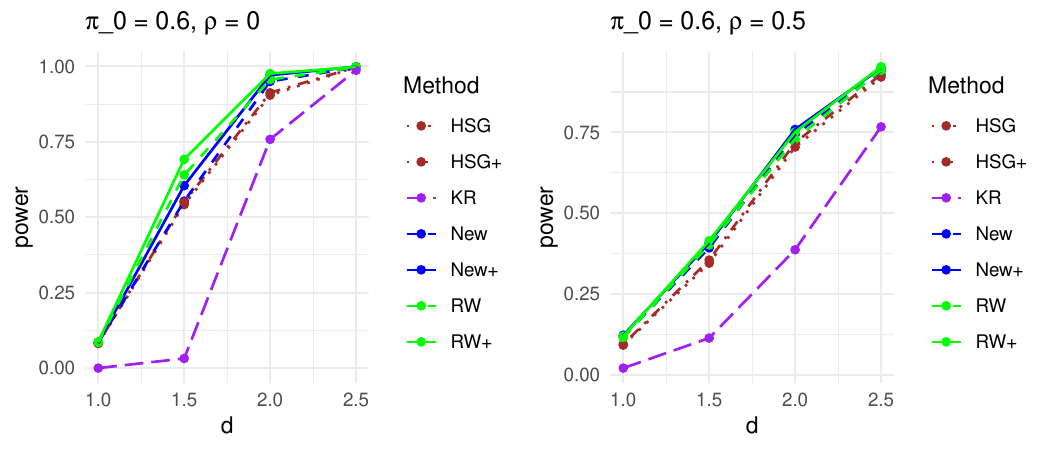}
  \caption{The power of the fast (``New'', ``HSG'', ``RW'') and computationally intensive versions (``New+'', ``HSG+'', ``RW+'') of the new method,  \citt{hemerik2019permutation} and Romano-Wolf,  for $m=100$ and varying $\rho$ and signal size $d$. Results for the Katsevich-Ramdas method (``KR'') are also shown. Every estimate is based on $10^3$ repeated simulations.}
 \label{fig:plots_sequential_m100_legend_RW}
\end{figure}

\section{Real data analysis} \label{secdataribo}
The Supplementary Material contains an extensive analysis of aggregated music review data. Here, we provide a brief illustrative  analysis of a  dataset about riboflavin (vitamin B2) production with \emph{B. subtilis}. This dataset
is freely available \citp{buhlmann2014high}. It contains normalized measurements
of expression rates of 4088 genes from $n = 71$ samples. 
Further, the dataset contains the corresponding 71 measurements of the logarithm of the riboflavin production
rate.
For each $1 \leq i \leq  4088$, we are interested in the hypothesis  $\Hy_i$ that the riboflavin production
rate was independent
of the expression level of gene $i$. 
 We took $\alpha=0.05$. Within every method, we used $10^3$ random permutations of the production rates.
 For every gene and permutation, we computed the test statistic as the  absolute value of the Pearson correlation between that gene's expression rates and the permuted riboflavin production
rates.
 
 For the maxT method to be valid, we need Assumption \ref{assjointd} to hold. For discussions on why methods based on this assumption are reasonable in practice, see    \citt{anish}, \citt[][\S5]{hemerik2019permutation}  and Section \ref{secinvas} of this paper.
First we applied the single-step maxT method, which rejected 74  hypotheses. The sequential maxT method   led to the same  rejections. 

For our FDX method to be theoretically valid for finite samples, we need Assumption \ref{assjointdextra} to hold. Although this conditional invariance assumption may not hold, in our simulations studies we found no setting where this assumption was necessary, as already mentioned; further, the assumption is not necessary asymptotically, as shown in the Supplementary Material.
We took $\gamma=0.1$ and applied our simultaneous FDX method from Section \ref{secsingle}, which rejected 186 hypotheses. This means that with $95\%$ confidence, we know that among the 186 hypotheses with the largest test statistics, at least $(1-\gamma)100\%=90\%$, so 168, are false. Simultaneously, we know that among the 19 hypotheses with the largest test statistics, at least $90\%$, so 18, are false. Likewise, we simultaneously know that among the 9 hypotheses with the largest test statistics, at least $90\%$, so all 9, are false. Thus, is we reject the top 9 hypotheses, then we have FWER control. All these statements are simultaneously valid with probability at least $1-\alpha=0.95$.

\section{Discussion} \label{secdisc}

What existing simultaneous FDP methods \citp{goeman2019simultaneous,hemerik2019permutation,
blanchard2020post,katsevich2020simultaneous,vesely2023permutation}  have in common, is that they do not excel at FDX control, since they cannot focus power on a given target FDP like FDX methods or FDR methods. 
Vice versa,  existing FDX methods are not simultaneous.
This paper shows however, that we can have powerful FDX control while also providing a useful type of simultaneous control. Indeed, we have seen that for FDX control, the new method is almost as powerful as Romano-Wolf, the most powerful basic FDX method. At the same time, the new method provides many additional guarantees, namely simultaneous FDX control at multiple resolutions.   In particular, if the user ``zooms in'' far enough, they will obtain FWER-control statements, simultaneously with the other FDP statements. 
The reason why we do not have much lower power for FDX control than Romano-Wolf, is that the FDP is typically roughly decreasing as a function of the rejection threshold, and the new method exploits this.  
Importantly, it has proven validity under the assumptions under which the resampling-based Romano-Wolf method is known to be valid --- although these two methods are defined very differently otherwise.

To our knowledge, this is in fact the first paper that compares the performance of FDX methods with simultaneous FDP methods. For example, the resampling-based simultaneous FDP papers \citt{hemerik2019permutation},   \citt{blain2022notip},  \citt{andreella2023permutation} and \citt{vesely2023permutation} do not compare their performance with the resampling-based basic FDX method of \citt{romano2007control}. Although Romano-Wolf is not simultaneous, it is still interesting to fix a $\gamma$ and see how the simultaneous methods perform compared to Romano-Wolf. 

%Most existing resampling-based multiple testing methods require the same input: a matrix of resampled test statistics. This is in particular the case for Romano-Wolf and the new method. Beyond that however, the methods are very different. For example,

%The reason why it is possible to achieve almost the power for FDX control as Romano-Wolf, is that the FDP is typically roughly decreasing as a function of the rejection threshold, and our method exploits this. 

From a user's perspective, an important advantage of the new method is that it does not require choosing tuning parameters except $\alpha$ and $\gamma$, unlike e.g. \citt{hemerik2019permutation}. %The method of \citt{blain2022notip} also has this advantage, but with the downside that it requires additional external data.
Further, the output of the new method is simply a single number $q$, or $q_{\fin}$ for the sequential version. Despite this simple output, the procedures provide simultaneous statistical guarantees.  
 The new methods require the same input as the maxT procedure and have been implemented in the R package \verb|rFDP| \citp{hemerik2025rFDP}.
 Differences with \citt{hemerik2019permutation}, \citt{blain2022notip} and  \citt{andreella2023permutation} on a more fundamental methodological level are discussed in the Supplementary Material.

%The new method rejects at least one hypotheses if and only if maxT does. The same is true about Romano-Wolf. This represents a downside of these methods compared to e.g. \citt{rosenblatt2018all},
%\citt{hemerik2019permutation} and  \citt{blanchard2020post}, which may still provide some weak FDP guarantees when maxT rejects nothing. However, such methods are less powerful for FDX control and are arguably less simple to use.

In this paper, we have mostly focused on proving finite-sample properties. Of course, finite-sample validity is only possible in simple models. For example, tests in generalized linear models with continuous nuisance covariates are never exact. However, if we only require asymptotic exactness, the new method can definitely be combined with resampling-based asymptotic tests. For example, we can combine the method with the score flipping approach from \citt{hemerik2020robust,de2025permutation}, in the same way as the maxT method can be combined with it \citp[see][]{desantis2025inference}. This allows for testing in generalized linear models for example.

\setlength{\bibsep}{3pt plus 0.3ex}  %change space between references
\def\bibfont{\footnotesize}  %\def\bibfont{\footnotesize}     %change font size in references

\bibliographystyle{biblstyle}
\bibliography{bibliography}

\appendix

\section{Overview of the Supplementary Material}

The Supplementary Material is structured as follows.

\begin{itemize}

\item  Section  \ref{seccomp} discusses how the rejection thresholds $q$ and $q_{\fin}$ can be conveniently computed in practice, in an exact way.

\item  Section \ref{secformpvs} provides an equivalent formulation of our methods in terms of p-values rather than test statistics. 

\item Section \ref{comphsg} provides a theoretical comparison of the new method with \citt{hemerik2019permutation} and related methods.

\item  Section \ref{appas} discusses asymptotic control, showing that Assumption \ref{assjointdextra} is not necessary asymptotically.

\item Section \ref{secconmaxt} discussed connections between the proposed method and maxT.

\item Section  \ref{appct} discusses the new methods in the context of \citt{goeman2021only}, which provides general theory on multiple testing procedures that make confidence statements on false discovery proportions.

\item Section \ref{secdataan} contains an  analysis of a dataset with aggregated music review scores.

\item  Section \ref{appproofs} contains proofs of results from the paper and this supplement

\end{itemize}

%\setlength{\bibsep}{3pt plus 0.3ex}  %change space between references
%\def\bibfont{\small}  %\def\bibfont{\footnotesize}     %change font size in references

%\bibliographystyle{biblstyle}
%\bibliography{bibliography}

\section{Computing the rejection threshold} \label{seccomp}
\subsection{Computation of the threshold $q$ for the single-step method}  \label{seccompsingle}
The threshold $q$ is a simple quantile. The underlying quantities $s_g$, $g\in \G$, however are suprema.
To compute $s_g$, note that for any $x\in\X$ the function $t\mapsto R(t,x)$ is a right-continuous step function which is  non-increasing in $t$. Consequently, the function 
\begin{equation} \label{mapttofdpest}
t\mapsto \frac{R(t,gX)}{R(t,X)\vee 1} 
\end{equation}
is also  a right-continuous step function. 
For all small enough $t$ it equals 1.
Note that this function has a discontinuity at $s_g$. More precisely, the function     $t\mapsto R(t,gX)$ jumps down at $s_g$. 
%Thus, $s_g$ equals one of the discontinuities of $t\mapsto \frac{R(t,gX)}{R(t,X)\vee 1}$. 
 Let $\M_g$ be the set of all discontinuities of the function \eqref{mapttofdpest}, i.e.,  $$\M_g := \Big\{T_i(gX): 1\leq i \leq m\Big\}\cup \Big\{T_i(X): 1\leq i \leq m\Big\}.$$
 %and let $\M_g=\M_g'\cup\{-\infty\}$. 
 
Let 
\begin{equation} \label{eqsgmin}
s_g^-= \max\{t\in \M_g: \frac{R(t,gX)}{R(t,X)\vee 1}  >\gamma \}
\end{equation}
where the maximum of an empty set is $-\infty$. Note that with probability 1, $s_g^-$ is strictly smaller than $\max(\M_g)$, since for   $t=\max(\M_g)$, the fraction in \eqref{eqsgmin} equals $0/(0\vee 1)=0$.
Right from $s_g^-$, $\frac{R(\cdot,gX)}{R(\cdot ,X)\vee 1}$ stays constant until it reaches the next element of  $\M_g$. Note that $s_g$ must equal this next element.
Thus, we have 
$$ s_g = \min\{t\in \M_g: t> s_g^-\}.$$
%Note in particular that if $s_g^-=-\infty$, then $s_g=\min(\M)$.
This formula is useful for computing    $s_g$ for every $g\in \G$, and hence $q$. This method  is also shown in Algorithm \ref{a:si}.

\begin{algorithm}[ht!] 
\caption{Single-step method}
\begin{algorithmic}  \label{a:si}
\FOR{$g\in \G$}
    \STATE $\M_g \gets \Big\{T_i(gX): 1\leq i \leq m\Big\} \cup  \Big\{T_i(X): 1\leq i \leq m\Big\} $
    \STATE $ s_g^- \gets \max\Big\{t\in \M_g: \frac{R(t,gX)}{R(t,X)\vee 1}  >\gamma \Big\}$
    \STATE $ s_g \gets \min\Big\{t\in \M_g: t> s_g^-\Big\}$
\ENDFOR
\STATE $q \gets$ $(1-\alpha)$-quantile of the $s_g$, $g\in \G$.
\RETURN $q$
\end{algorithmic}
\end{algorithm}

\subsection{Computation of the threshold $q_{\fin}$ for the sequential method}
In the sequential method, in every step $i$ we need to compute $s_{g,i}^\I$ for every $\I \in \K_i$ and $g\in \G$.
To compute this supremum, we can proceed in a manner that is analogous to the computation of $s_g$ in Section \ref{seccompsingle}. Namely,
let $\M_g^{\I}$ be the set of all discontinuities of the maps $t\mapsto |\I\cap \R(t,gX)|$ and $t\mapsto R(t,X)$. Thus,
$$\M_g^{\I} := \Big\{T_i(gX): i\in \I \Big\}\cup \Big\{T_i(X): 1\leq i \leq m\Big\}.$$
 We first compute $$s_{g,i}^{\I-}:= \max\{t\in \M_g^{\I}: \frac{| \I\cap \R(t,gX)    |}{R(t,X)\vee 1}  >\gamma \}.$$
 With probability 1, we have $s_{g,i}^{\I-}<\max(\M_g^{\I})$.
Then, we have $s_{g,i}^{\I} =   \min\{t\in \M^I_g: t > s_{g,i}^{\I-}\}$. %This approach will also be used in Algorithm \ref{a:ap}.

\section{Equivalent formulation in terms of p-values} \label{secformpvs}
The input for our single-step method are the test statistics $T_i(gX)$ with $1\leq i \leq m$ and $g\in \G$.
Sometimes is may be preferred to use p-values instead of test statistics. The method can straightforwardly be reformulated in terms of p-values $P_1(X),...,P_m(X)$, which take values in $(0,1]$. We must then adjust the assumptions by replacing $T_i$ by $P_i$ in Assumptions \ref{assjointd}-\ref{asscont}.
Further, we then consider thresholds $t\in(0,1]$ and for $x\in \X$ define
$$\R(t,x)=\{1\leq i \leq m: P_i(x)<t\}.$$
Note that since our approach is nonparametric, we do not need to assume that the p-values are valid in the sense that $\pr(P_i\leq c)\leq c$ for $c\in[0,1]$ if $\Hy_i$ is true.

%For example, to obtain such p-values, we can consider test statistics as defined before and for every $1\leq i \leq m$ define $P_i$ to be a decreasing function of $T_i$ such that $P_i$ only takes values in $(0,1]$. 

Note that if we have p-values $P_1(X),...,P_m(X)$, we can translate them into test statistics by simply taking e.g. $-P_1(X),...,-P_m(X)$ or $1-P_1(X),...,1-P_m(X)$. Then, our method from Section \ref{secsingle} can be used. Equivalently, we can directly use p-values, if we slightly adjust the method as follows.

When p-values are used, we can define $s_g$ as before, except that it is now an infimum: 
$$s_g=s_g(X) :=\inf\Big\{t\in(0,1]: \frac{R(t,gX)}{R(t,X)\vee 1}>\gamma\Big\}.$$
 Further, we now define $q$ to be the $\alpha$-quantile of the values $s_g$, $g\in \G$ --- more precisely,
 \begin{equation} \label{eqdefqpv}
 q := \min\{t\in\mathbb{R}: \frac{|\{g\in \G: s_g\geq  t\}|}{|\G|}\geq 1-\alpha \}.
 \end{equation}
 We reject all hypotheses $\Hy_i$ with $P_i(X)<q$.

For each $g\in \G$, to compute $s_g$ in practice, we can proceed analogously to Section \ref{seccomp}, namely as follows. 
Note that the function  $$(0,1]\ni t\mapsto \frac{R(t,gX)}{R(t,X)\vee 1} $$
is left-continuous and stepwise constant. It has a discontinuity at $s_g$. More precisely, the function     $t\mapsto R(t,gX)$, which is non-decreasing,  jumps up at $s_g$.
 Let  $$\M_g := \Big\{P_i(gX): 1\leq i \leq m\Big\}\cup \Big\{P_i(X): 1\leq i \leq m\Big\}.$$
and
$$s_g^-= \min\{t\in \M_g: \frac{R(t,gX)}{R(t,X)\vee 1}  >\gamma \},$$
where the minimum of an empty set is $\infty$.
Note that with probability 1, $s_g^-$ is strictly larger than $\min(\M_g)$,  since for   $t=\min(\M_g)$, the fraction in \eqref{eqsgmin} equals $0/(0\vee 1)=0$.
Note that we then have 
$$ s_g = \max\{t\in \M_g: t < s_g^-\}.$$  
We then compute $q$ using formula \eqref{eqdefqpv}.
This method  is also shown in Algorithm \ref{a:sip}. 
In the main paper, we consider test statistics rather than p-values, so the definitions from Section \ref{secsingle} apply.
\begin{algorithm}[ht!] 
\caption{Single-step method based on p-values}
\begin{algorithmic}  \label{a:sip}
\FOR{$g\in \G$}
    \STATE $\M_g \gets \Big\{P_i(gX): 1\leq i \leq m\Big\} \cup  \Big\{P_i(X): 1\leq i \leq m\Big\} $
    \STATE $ s_g^- \gets \min\Big\{t\in \M_g: \frac{R(t,gX)}{R(t,X)\vee 1}  >\gamma \Big\}$
    \STATE $ s_g \gets \max\Big\{t\in \M_g: t< s_g^-\Big\}$
\ENDFOR
\STATE $q \gets$ $(1-\alpha)$-quantile of the $s_g$, $g\in \G$.
\RETURN $q$
\end{algorithmic}
\end{algorithm}

\section{Theoretical comparison with \citt{hemerik2019permutation}} \label{comphsg}
Here, we explain the theoretical innovation of the new method compared to \citt{hemerik2019permutation} and the related methods in \citt{blain2022notip}  and \citt{andreella2023permutation}.
Like the proposed method, the method in \citt{hemerik2019permutation} provides certain simultaneously valid statements on FDPs --- in the form of a \emph{confidence envelope}. If a method is based on test statistics rather than p-values, then  a confidence envelope is a function $B:\reals \rightarrow \mathbb{N}$ satisfying
$$ \mathbb{P} \Big (\bigcap_{t\in \reals} \big \{V(t) \leq B(t)\big \}  \Big ) \geq 1- \alpha.$$
In \citt{hemerik2019permutation}, the function $B(\cdot)$ is picked from a monotone family $\mathbb{B}$ of \emph{candidate envelopes}. Importantly, the envelopes in the family $\mathbb{B}$ have a shape that should be chosen independently from the data. We will now discuss that the new procedure can also be seen as a method that provides a confidence envelope; however, the shape of this confidence envelope  depends on the data, in such a way that the method is powerful for (multi-resolution) FDX control.

The new method guarantees that 
$$\pr\big\{\forall t\geq q: FDP(t)\leq \gamma\big\}\geq 1-\alpha,$$
or equivalently, 
\begin{equation} \label{eq:defce}
\pr\big\{\forall t\geq q: V(t)\leq \gamma R(t)\big\}\geq 1-\alpha.
\end{equation}
Here $q$ is the threshold of the single-step method; the above still holds if we replace $q$ with $q'$, the threshold of the sequential method.
The guarantee \eqref{eq:defce} is equivalent to  the  confidence envelope $B^q(\cdot)$, where for  $s\in \reals $
\begin{equation} \label{eq:Bq}
B^s(t) =
\begin{cases}
\lfloor \gamma R(t) \rfloor & \text{if } t\geq s, \\
R(t)  & \text{if } t<s.
\end{cases}
\end{equation}

We thus see that the new method can be formulated as a procedure  that provides a confidence envelope, just like \citt{hemerik2019permutation}. In fact,  like \citt{hemerik2019permutation}, the new method can be seen as choosing the envelope from a set of candidate envelopes, namely $\mathbb{B}=\{B^s: s\in \reals\}$. They key difference with \citt{hemerik2019permutation}, \citt{blain2022notip}  and \citt{andreella2023permutation} however, is that the new candidate envelopes  $B^s$  depend on the data, since they depend on the actual numbers of rejections $R(t)$, $t\in \reals$. This is thus the key theoretical  innovation of the new method. This innovation is important, since the authors of \citt{hemerik2019permutation} were forced to make the candidate envelopes data-independent to make their theory work. That is particularly problematic  if one does know whether there are few or many signals.  With the new approach, the candidate envelopes depend on the data. In particular, we can define the candidate envelopes in such a way that we have high power for multi-resolution FDX control for a given $\gamma\in[0,1)$. Note that if there are few signals, then the new method behaves similarly to maxT (as shown in Section \ref{secconmaxt}), and when there are many signals, it tends to reject much more than maxT.  Also note that our candidate envelopes  $B^s$ do not require choosing any tuning parameters, unlike the envelopes in \citt{hemerik2019permutation} and  \citt{andreella2023permutation}. A further minor advantage is that we do not need to compute p-values, unlike those methods. A disadvantage of the new method is that  we make Assumption \ref{assjointdextra}, although we found no  settings where it was actually necessary, and asymptotically we do not need it (see Section \ref{appas}).

Note that in the definition of the new method, the suprema $s_g$ are defined as the largest value for which a certain property is satisfied. In the proposed method, this criterion is $\{R(t,gX)\}/\{R(t,X)\vee 1\}>\gamma,$ but it may  also be possible to replace this by other criteria, so that other new methods can be defined. We leave that possibility for future research.

\section{Asymptotic control} \label{appas}
Here we show that asymptotically, Assumption \ref{assjointdextra} is not necessary.
Note that when the sample size $n$ increases to infinity, usually the cardinality of $\G=\G_n$ also increases to infity. Hence, instead of using all transformations in $\G_n$, we  fix an integer $w\geq \alpha^{-1}$ independently of $n$ and use a collection $\G'_n=(g_1,...,g_w)$ of random transformations from $\G_n$, Here we define $g_1$ to be the identity, and sample $g_2,...,g_w$ with or without replacement. Using such random transformations is still valid, in the sense of \citt{hemerik2018exact,hemerik2018false}. 

\begin{proposition} \label{propasympt}
Suppose that the data $X=X_n$ depend on $n$,  which usually indicates the sample size. 
Suppose that as $n\rightarrow\infty$, for every $i\in \F$, $T_i(X_n)$ converges to $\infty$ in probability, while the  test statistics $T_i(g'X_n)$ with $i\in \N$ and $g'\in \G_n$ remain bounded in probability.
Suppose that Assumption \ref{assjointd} holds for every $n$, but do not make Assumption \ref{assjointdextra}. 

Then asymptotically we have simultaneous FDX control, in the sense that
$$\liminf_{n\rightarrow\infty} \mathbb{P}\big\{\forall t\geq q: FDP(t, X_n)\leq \gamma\big\}\geq 1-\alpha,$$
for $q$ as defined in Section \ref{secsingle}. The same is true for the sequential method from Section \ref{secfullseq}, i.e., if we replace $q$ by $q_{\fin}$.

\end{proposition}

\section{Connections with the maxT method} \label{secconmaxt}

If we take $\gamma=0$, then FDX methods guarantee that $\pr(FDP> 0)\leq \alpha$, i.e., the FWER is controlled.
The most popular resampling-based FWER controlling method is the  \emph{maxT method} (defined in Section \ref{specialcasegamma0}) by Westfall and Young \citp{westfall1993resampling,westfall2008multiple,meinshausen2011asymptotic,koning2024morep}. 
%The maxT method is popular in e.g.  neuroimaging \citp{nichols2002nonparametric,eklund2016cluster,white2023controlling} and  genomics \citp{steiss2012permory,terada2015high,rosyara2016software,power2016genome,saffari2018estimation}.

It turns out that the  methods proposed in the previous sections, are closely connected 
to maxT.
In particular, as shown in Section \ref{specialcasegamma0}, when $\gamma=0$, our single-step and sequential FDX method coincide with respectively the single-step and sequential version of maxT. 
This is quite surprising, since single-step maxT does not involve suprema like our single step method, and the sequential versions of both methods are very different as well.
Nevertheless, the new procedures are thus a generalization of maxT, allowing the user to pick any $\gamma\in[0,1)$ rather than only $\gamma=0$.  The power of the new method strongly increases with $\gamma$, since increasing $\gamma$ means allowing for some false positives. This is illustrated in Fig. \ref{fig:powervsgamma}.
The Romano-Wolf method also coincides with maxT for $\gamma=0$, which follows immediately from the definition Romano-Wolf. Showing that the new method reduces to maxT for $\gamma=0$, is more work.

%For example, in neuroimaging, the brain is often partitioned into many voxels, where every voxel corresponds to a hypothesis. Quite commonly, there are many false hypotheses  and users are willing to accept that some small percentage, e.g. $5\%$, of the rejected hypotheses are false findings. The same is true in many gene expression studies.

\begin{figure}[ht!] 
\centering
  \includegraphics[width=0.8\linewidth]{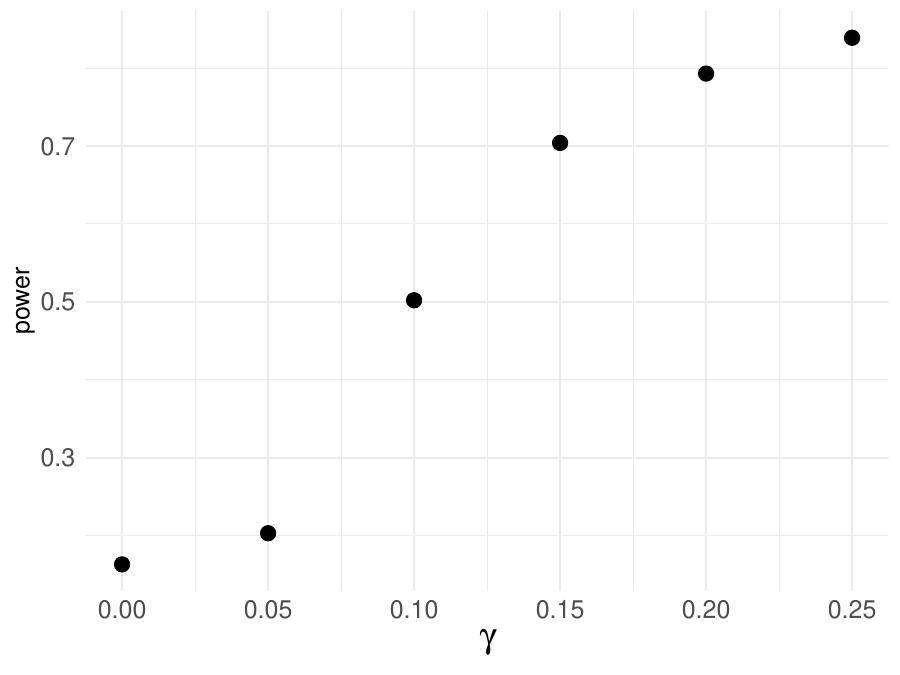}
  \caption{The power of the single-step FDX method as depending on $\gamma$. The power is the mean fraction of the false hypotheses that are rejected. For $\gamma=0$, the new method coincides with maxT. The power estimates are based on $10^3$ repeated simulations. The data were distributed as specified in Section \ref{secsimsetting}, with $m=500$, $\rho=0$ and 100 false hypotheses. The plot shows that single-step maxT --- i.e., the new method with $\gamma=0$ --- rejected about 15 false hypotheses on average. The power increases with $\gamma$.}
 \label{fig:powervsgamma}
\end{figure}

There are also other connections with maxT. In Section \ref{secotherconmaxt} we show that if $\gamma\in(0,1)$ and the maxT method rejects fewer than $\gamma^{-1}$ hypotheses, then with probability 1, our method rejects exactly the same  hypotheses. Note that our method then provides the exact same confidence statement as maxT, namely that the FDP is 0.

Further, we show that for  every $\gamma\in[0,1)$, with probability 1, our method rejects at least one hypothesis if and only the maxT methods reject at least one hypothesis. This means that the global test implied by our procedure is the same as that of maxT. In Section \ref{appct}  we discuss this in light of the closed testing theory in \citt{goeman2021only}.

\subsection{The maxT method is the special case $\gamma=0$} \label{specialcasegamma0}

The well-known maxT procedure is defined as follows.
The single-step maxT method simply rejects all hypotheses $\Hy_i$ for which $T_i(X)>Q_0$, where $Q_0$ denotes the $(1-\alpha)$-quantile of the maxima $\max_{1\leq i \leq m} T_i(gX)$, $g\in \G$. The sequential method then continues iteratively as follows. Let $\R^0$ be set of indices of hypotheses rejected by the single-step maxT method, which we call step 0. 
Next, we define 
$\R^1=\{1\leq i \leq m: T_i(X)>Q_1\}$, where  $Q_1$ denotes the $(1-\alpha)$-quantile of the maxima  $\max_{i\in (\R^0)^c} T_i(gX)$, $g\in \G$. 
We continue like this, in the $j$-th step rejecting all hypotheses with indices in $$\R^j =\{ i\in\{1,...,m\}:T_i(X)>Q_j\},$$ where  $Q_j$ denotes the $(1-\alpha)$-quantile of the maxima  $\max_{i\in  {(\R^{j-1})}^c} T_i(gX)$, $g\in \G$. 
We obtain a sequence $Q_0\geq Q_1\geq...$, which converges to $Q_{\fin}$, say, after a finite number of steps.
The sequential maxT method rejects all hypotheses $\Hy_i$ for which $T_i(X)>Q_{\fin}$.

\begin{theorem} \label{thmgamma0}
Suppose $\gamma=0$. Then 
 our 
single-step  method (Section \ref{secsingle}) rejects exactly the same hypotheses as the single-step maxT method. The same holds for our sequential  method (Section \ref{secfullseq}) and the sequential maxT method.
%More precisely, $q=Q_0$ and $q_{\fin}=Q_{\fin}$.
\end{theorem}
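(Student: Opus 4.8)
The plan is to treat the single-step and sequential claims separately, and in both cases to show that substituting $\gamma=0$ into our construction collapses it exactly onto the corresponding maxT construction.

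For the single-step statement, I would first evaluate $s_g$ under $\gamma=0$. Since the denominator $R(t,X)\vee 1$ is always at least $1$, the fraction $R(t,gX)/(R(t,X)\vee 1)$ exceeds $0$ if and only if $R(t,gX)>0$, i.e. if and only if $t<\max_{1\leq i \leq m} T_i(gX)$. Hence the defining set in \eqref{eqdefsg} is the interval $(-\infty,\max_i T_i(gX))$, and $s_g=\max_{1\leq i \leq m} T_i(gX)$ for every $g\in\G$. The threshold $q$ is then by definition the $(1-\alpha)$-quantile of the values $\max_i T_i(gX)$, $g\in\G$, which is precisely $Q_0$. As both procedures reject $\{i: T_i(X)>q\}=\{i:T_i(X)>Q_0\}$, they reject the same hypotheses. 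The only point needing a line of care is that the quantile convention used to define $q$ matches the one used for $Q_0$; both select the smallest $t$ for which at least a $1-\alpha$ fraction of the values are $\leq t$, so they agree.

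For the sequential statement, the key observation is combinatorial. With $\gamma=0$ we have $B_i=\lceil R(q_{i-1})\rceil=R(q_{i-1})=|\R(q_{i-1})|$, and the constraints $\I^c\subseteq\R(q_{i-1})$ together with $|\I^c|=R(q_{i-1})$ force $\I^c=\R(q_{i-1})$. Thus $\K_i$ is a singleton, $\K_i=\{(\R(q_{i-1}))^c\}$, and the maximum over $\K_i$ in \eqref{defqi} reduces to a single term. For this unique $\I=(\R(q_{i-1}))^c$, the same reasoning as in the single-step case gives $s_{g,i}^{\I}=\max_{i\in\I} T_i(gX)=\max_{i\in(\R(q_{i-1}))^c} T_i(gX)$, so $q_i$ is the $(1-\alpha)$-quantile of exactly these maxima. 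I would then finish by induction on the step index: the base case $q_0=q=Q_0$ is the single-step result, and assuming $q_{i-1}=Q_{i-1}$ we obtain $\R(q_{i-1})=\R(Q_{i-1})=\R^{i-1}$, hence $(\R(q_{i-1}))^c=(\R^{i-1})^c$. The quantities being quantiled, $\max_{i\in(\R^{i-1})^c} T_i(gX)$, therefore coincide for the two methods, and taking $(1-\alpha)$-quantiles yields $q_i=Q_i$. Since both sequences are nonincreasing and stabilize after finitely many steps while agreeing at every index, their limits coincide, $q_{\fin}=Q_{\fin}$, so the two sequential methods reject the same set $\{i:T_i(X)>q_{\fin}\}$.

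The main obstacle is not a single hard estimate but the bookkeeping of the induction together with the collapse of $\K_i$ to a singleton: one must keep the rejection sets $\R(q_{i-1})$ and $\R^{i-1}$ identified at every step so that the index sets over which the inner maxima are taken remain equal, and one must apply the quantile convention consistently for both procedures throughout. Once the singleton structure of $\K_i$ is recognized, the remaining work is routine.
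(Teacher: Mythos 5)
Your proof is correct and follows essentially the same route as the paper's: evaluate $s_g$ at $\gamma=0$ to get $s_g=\max_i T_i(gX)$ for the single-step case, then observe that $B_i=R(q_{i-1})$ forces $\K_i$ to be the singleton $\{(\R(q_{i-1}))^c\}$ and close the sequential case by induction on the step index. No gaps.
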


\subsection{Other connections with maxT} \label{secotherconmaxt}

The following proposition  states that if the new method rejects few hypotheses, then maxT rejects the same, and vice versa. This means that when there are few signals in the data, the new method roughly behaves like maxT. 

\begin{proposition} \label{samewhenfewrej}
Consider any $\gamma\in(0,1)$. Suppose that our single-step method (Section \ref{secsingle}) or single-step maxT rejects strictly fewer than $\gamma^{-1}$ hypotheses. Then with probability 1, they reject exactly the same hypotheses; more precisely, $q=Q_0$. 
The same holds for the sequential versions of our method (Section \ref{secfullseq}) and the maxT method; we then have $q_{\fin}=Q_{\fin}$ with probability 1.
\end{proposition}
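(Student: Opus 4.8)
The plan is to prove the single-step identity $q=Q_0$ first and then obtain the sequential identity $q_{\fin}=Q_{\fin}$ by induction on the step index, reusing the single-step argument. Write $\mu_g:=\max_{1\le i\le m}T_i(gX)$, so that $Q_0$ is the $(1-\alpha)$-quantile of the values $\mu_g$, $g\in\G$. The two forms of the hypothesis are interchangeable: since $q\le Q_0$ (established below), $R(q,X)<\gamma^{-1}$ forces $R(Q_0,X)\le R(q,X)<\gamma^{-1}$, and once $q=Q_0$ is proven the two rejection counts agree; so it suffices to argue under $R(Q_0,X)<\gamma^{-1}$.

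For the single-step case I would first note the easy bound $s_g\le\mu_g$ for every $g$, since for $t\ge\mu_g$ the numerator $R(t,gX)$ in \eqref{eqdefsg} vanishes; monotonicity of the quantile then gives $q\le Q_0$. The crux of the reverse inequality is that, wherever $R(t,X)<\gamma^{-1}$, one has $\gamma\big(R(t,X)\vee 1\big)<1$, so the FDX criterion $\frac{R(t,gX)}{R(t,X)\vee 1}>\gamma$ reduces to $R(t,gX)\ge 1$, i.e. to $t<\mu_g$ --- exactly the maxT criterion. Consequently, for any $g$ with $\mu_g>Q_0$ and any $t\in[Q_0,\mu_g)$ we have $R(t,X)\le R(Q_0,X)<\gamma^{-1}$ and $R(t,gX)\ge1$, so the fraction exceeds $\gamma$; letting $t\uparrow\mu_g$ gives $s_g=\mu_g$. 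By Assumption \ref{asscont} the $\mu_g$ are distinct with probability $1$, so a single $g_0$ satisfies $\mu_{g_0}=Q_0$, and a short separate check (treating $g_0=id$, where $R(Q_0,X)=0$, and $g_0\ne id$, where $Q_0$ is not an observed statistic and $R(\cdot,X)$ has no jump at $Q_0$) yields $s_{g_0}=Q_0$. Hence $s_g=\mu_g$ for all $g$ with $\mu_g\ge Q_0$, and since $|\{g:\mu_g\ge Q_0\}|>\alpha|\G|$, every $t<Q_0$ satisfies $|\{g:s_g\le t\}|<(1-\alpha)|\G|$, forcing $q\ge Q_0$. Thus $q=Q_0$ and the two methods reject the same hypotheses.

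For the sequential case I would induct on $i$, with base case $q_0=Q_0$. Assuming $q_{i-1}=Q_{i-1}$, so $\R(q_{i-1},X)=\R^{i-1}$, the clean regime is when the running rejection count stays below $\gamma^{-1}$: then $B_i=\lceil(1-\gamma)R(q_{i-1},X)\rceil=R(q_{i-1},X)$, which forces $\I^c=\R(q_{i-1},X)$, so $\K_i=\{(\R^{i-1})^c\}$ is a singleton whose index set is exactly the one used in the $i$-th maxT step. Running the single-step argument with $\max_{j\in(\R^{i-1})^c}T_j(gX)$ in place of $\mu_g$ then gives $q_i=Q_i$, keeping the two methods locked together.

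The main obstacle is the step at which the running number of rejections first reaches $\gamma^{-1}$. There $\K_i$ ceases to be a singleton, so $q_i=\max_{\I\in\K_i}q_i^{\I}$ becomes a genuine maximum over candidate sets $\I\supsetneq(\R^{i-1})^c$, and simultaneously $R(t,X)$ can equal or exceed $\gamma^{-1}$ on part of the range $[Q_i,\max_{j\in\I}T_j(gX))$, so the pointwise collapse of the FDX criterion to the maxT criterion no longer holds. Resolving this requires showing that the maximizing $\I$ is the one removing the top $B_i$ rejected statistics, so that $s_{g,i}^{\I}$ again reduces to a maxT-type quantile over the surviving indices, and that on the critical range the strict inequality still pins $t$ below $\max_{j\in\I}T_j(gX)$ whenever $R(t,X)<\gamma^{-1}$, while for $R(t,X)\ge\gamma^{-1}$ both methods are tied by rejecting $\{i:T_i(X)>t\}$ together with the no-tie Assumption \ref{asscont}. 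Verifying that this maximization reproduces $Q_i$ exactly --- so that the two monotone sequences reach their common fixed point simultaneously and $q_{\fin}=Q_{\fin}$ --- is the heart of the matter.
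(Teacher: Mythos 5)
Your single-step argument is correct and is essentially the paper's own proof: you establish $s_g\le\mu_g$ (hence $q\le Q_0$), show that $R(t,X)<\gamma^{-1}$ collapses the FDX criterion $\frac{R(t,gX)}{R(t,X)\vee 1}>\gamma$ to the maxT criterion $R(t,gX)\ge 1$ so that $s_g=\mu_g$ for every $g$ with $\mu_g\ge Q_0$ (with the same case split on $g=id$ versus $g\ne id$ and the same appeal to Assumption \ref{asscont} to rule out a jump of $R(\cdot,X)$ at $Q_0$), and close with the quantile-counting argument. The reduction of the ``or'' in the hypothesis to $R(Q_0,X)<\gamma^{-1}$ via $q\le Q_0$ is also handled correctly.

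The sequential half, however, is left genuinely incomplete, and the reason is that the ``main obstacle'' you describe cannot occur under the hypothesis of the proposition. You worry about ``the step at which the running number of rejections first reaches $\gamma^{-1}$,'' where $\K_i$ would cease to be a singleton. But the maxT thresholds satisfy $Q_0\ge Q_1\ge\cdots\ge Q_{\fin}$, so the rejection counts $R(Q_j,X)$ are \emph{non-decreasing} in $j$ and are all bounded above by the final count $R(Q_{\fin},X)$, which the hypothesis (after the same reduction via $q_{\fin}\le Q_{\fin}$) asserts is strictly below $\gamma^{-1}$. Hence under the inductive hypothesis $q_{i-1}=Q_{i-1}$ one has $R(q_{i-1},X)=R(Q_{i-1},X)\le R(Q_{\fin},X)<\gamma^{-1}$ at \emph{every} step, so $\gamma R(q_{i-1},X)<1$, $B_i=\lceil(1-\gamma)R(q_{i-1},X)\rceil=R(q_{i-1},X)$, and $\K_i$ is the singleton $\{(\R^{i-1})^c\}$ throughout the entire recursion. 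Your ``clean regime'' is therefore the only regime, the single-step argument applied to $\max_{j\in(\R^{i-1})^c}T_j(gX)$ carries every step, and the induction closes with $q_{\fin}=Q_{\fin}$. This is exactly how the paper's proof proceeds; what you labelled ``the heart of the matter'' is vacuous here, and by stopping there you have not actually completed the proof of the second claim. (The scenario you describe would only become relevant if one tried to compare the two procedures without the few-rejections hypothesis, which is not what the proposition asserts.)
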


Note that roughly speaking, Theorem \ref{thmgamma0} follows from Proposition \ref{samewhenfewrej}; indeed, if we take $\gamma$ close enough to 0, then Proposition \ref{samewhenfewrej} says that with probability 1, our method and maxT reject the same hypotheses. However, the statement in Theorem \ref{thmgamma0} holds surely and not only with probability 1.
That the new method rejects the same as maxT when one of the methods rejects fewer than $\gamma^{-1}$ hypotheses, is illustrated in Fig. \ref{fig:plotnrej}.

%\begin{comment}
%Further, with Proposition \ref{samewhenfewrej} in mind, it is interesting to have another look at Figure \ref{fig:powervsgamma}.
%For $\gamma=0$, the new method coincides with maxT. 
% The figure shows that increasing $\gamma$ from 0.05 to 0.1 leads to a huge power gain. Changing $\gamma$ from 0  to 0.05 however, %leads to a comparatively small power gain. The reason is that when $\gamma=0.05$, we have $\gamma^{-1}=20$, which is usually larger than the number of hypotheses that maxT rejects, since  maxT rejected about 15 hypotheses on average. By Proposition \ref{samewhenfewrej}, if maxT rejects fewer than $\gamma^{-1}$ hypotheses, then our FDX method rejects exactly the same. Thus, for $\gamma=0.05$, the two methods often reject the same, which explains the comparatively small gain in power when increasing $\gamma$ from 0  to 0.05.
%\end{comment}

 \begin{figure}[ht!] 
\centering
  \includegraphics[width=0.8\linewidth]{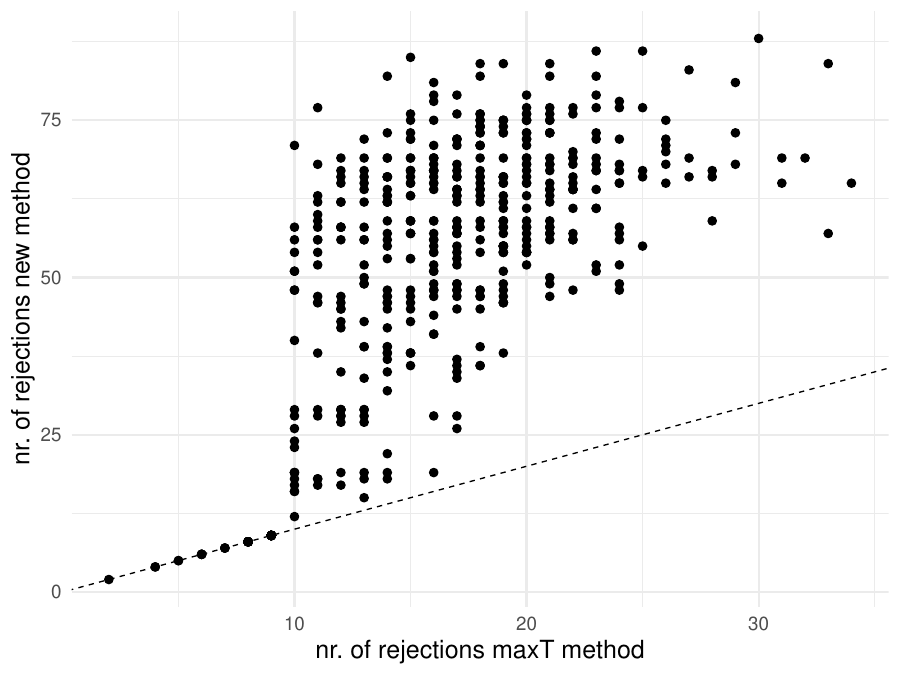}
  \caption{Scatterplot showing the number of rejections of the new method versus the number of rejections of maxT. Each point corresponds to a simulation.
The simulation setting was the same as for Fig. \ref{fig:powervsgamma}. In the new method, $\gamma=0.1$ was taken. MaxT corresponds to $\gamma=0$, by Theorem \ref{thmgamma0}. The plot  illustrates that if the new method rejects fewer than $\gamma^{-1}=10$ hypotheses, then maxT rejects the same, and vice versa.}
 \label{fig:plotnrej}
\end{figure}

\begin{corollary} \label{sameglobaltest}
Consider any $\gamma\in(0,1)$. 
If our single-step method (Section \ref{secsingle}) rejects at least one hypothesis, then with probability 1 so does
single-step maxT and vice versa. The same holds for the sequential versions of the methods.
\end{corollary}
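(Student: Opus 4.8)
The plan is to derive Corollary \ref{sameglobaltest} directly from Proposition \ref{samewhenfewrej}, exploiting the fact that for $\gamma\in(0,1)$ we have $\gamma^{-1}>1$, so that rejecting no hypothesis at all always counts as rejecting strictly fewer than $\gamma^{-1}$ hypotheses. Rather than proving the stated ``at least one iff at least one'' equivalence head-on, I would prove its contrapositive: with probability $1$, our single-step method rejects \emph{zero} hypotheses if and only if single-step maxT rejects zero hypotheses. This reformulation is convenient because the zero-rejection case is exactly where Proposition \ref{samewhenfewrej} is guaranteed to apply.

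The key steps, in order, are as follows. First, suppose single-step maxT rejects no hypothesis, i.e.\ $R(Q_0)=0$. Since $0<1<\gamma^{-1}$, maxT then rejects strictly fewer than $\gamma^{-1}$ hypotheses, so Proposition \ref{samewhenfewrej} (whose hypothesis is met if \emph{either} method rejects few) gives $q=Q_0$ with probability $1$; hence our method rejects the same hypotheses and in particular also rejects none. Second, by the symmetric argument, if our single-step method rejects no hypothesis then maxT rejects none as well. Combining these two implications yields the claimed equivalence of the two global tests, with probability $1$. The sequential case is handled identically, using the sequential part of Proposition \ref{samewhenfewrej}, which gives $q_{\fin}=Q_{\fin}$ with probability $1$ whenever one of the two sequential procedures rejects fewer than $\gamma^{-1}$ hypotheses.

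Since the argument is a short deduction from an already-established proposition, I do not anticipate a genuine technical obstacle; the only points requiring care are essentially bookkeeping. One must verify that $0$ indeed lies strictly below $\gamma^{-1}$ for every admissible $\gamma$ (immediate from $\gamma<1$), and that the ``with probability $1$'' qualifier from Proposition \ref{samewhenfewrej} is correctly carried through both implications, so that the final equivalence also holds with probability $1$. It is also worth checking consistency in the opposite regime: if, say, our method rejected $\gamma^{-1}$ or more hypotheses while maxT rejected none, this would contradict $q=Q_0$, confirming that no inconsistent configuration survives with positive probability.
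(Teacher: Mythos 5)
Your proposal is correct and follows essentially the same route as the paper: both deduce the corollary from Proposition \ref{samewhenfewrej} by observing that rejecting zero hypotheses is always strictly fewer than $\gamma^{-1}$, so the two methods must then agree with probability 1. The only cosmetic difference is that the paper handles the sequential case by first reducing it to the single-step case (the sequential version rejects something iff the single-step version does), whereas you invoke the sequential part of the proposition directly; both are valid.
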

\begin{proof}
The sequential version of each method only rejects something if the single-step version does. Hence, it suffices to prove the result for the single-step versions, which we now consider.
Note that if one of the methods rejects nothing, then by Proposition \ref{samewhenfewrej}, with probability 1, the other method rejects nothing either. Hence, if one method rejects something, the other method does so too with probability 1.
\end{proof}

The statements in Proposition \ref{samewhenfewrej} and Corollary \ref{sameglobaltest} hold with probability 1, i.e., almost surely, rather than surely. Section \ref{reasonprob1} includes a discussion of why this is the case.

%Section \ref{appct} discusses maxT and the new method in the context of the literature on closed testing.

\subsection{Remark on Proposition \ref{samewhenfewrej} and Corollary \ref{sameglobaltest}}  \label{reasonprob1}
%\begin{remark} \label{reasonprob1}
The statements in Proposition \ref{samewhenfewrej} and Corollary \ref{sameglobaltest} hold with probability 1, i.e., almost surely, rather than surely. The reason is that these results rely on Assumption \ref{asscont}, which implies that with probability 1 there are no ties among the test statistics. This assumption is typically satisfied for continuous data. When this assumption is not satisfied, it can happen that maxT rejects fewer than $\gamma^{-1}$ hypotheses while  the new method rejects more. Indeed, suppose that single-step maxT rejects fewer than $\gamma^{-1}$ hypotheses, i.e., 
\begin{equation} \label{eqmaxrejectsfewer}
|\{1\leq i \leq m: T_i(X)>Q_0\}|<\gamma^{-1}.
\end{equation}
%Single-step maxT only rejects hypotheses with test statistics strictly larger than $Q_0$. 
For discrete data, it may happen that multiple statistics are exactly $Q_0$, so that  while \eqref{eqmaxrejectsfewer} holds, at the same we have that %for at least $(1-\alpha)100\%$ of the transformations $g\in \G$,
\begin{equation*} %\label{eqmaxrejectsfewer2}
|\{1\leq i \leq m: T_i(X)\geq Q_0\}|\geq \gamma^{-1}.
\end{equation*}
It is possible  that there is an $\epsilon > 0$ such that 
 for at least $(1-\alpha)100\%$ of the transformations $g\in \G$,
$$R(Q_0-\epsilon  ,gX)\leq 1\quad \text{and}\quad  R(Q_0  ,gX)=0.$$
In that case, the fact that for all $t< Q_0$ we have $R(t,X)\geq \gamma^{-1}$, implies that 
for at least $(1-\alpha)|\G|$ of the transformations, for all $t\in(Q_0-\epsilon,Q_0)$,
\begin{equation} \label{bordercase1}
\frac{ R(t  ,gX)}{R(t,X)\vee 1}\leq \frac{1}{\gamma^{-1}}= \gamma 
\end{equation}
and for at least $(1-\alpha)|\G|$ of the transformations, for all $t\geq Q_0$, 
\begin{equation} \label{bordercase2}
\frac{ R(t  ,gX)}{R(t,X)\vee 1}= \frac{0 }{R(t,X)\vee 1}=0.
\end{equation}
Inequalities \eqref{bordercase1} and \eqref{bordercase2} imply that for at least $(1-\alpha)|\G|$ of the transformations,
for all  $t> Q_0-\epsilon$, 
$\frac{ R(t  ,gX)}{R(t,X)\vee 1}\leq \gamma$, which means that
$q<Q_0$ and the single-step FDX method rejects at least $\gamma^{-1}$ hypotheses, while maxT rejects strictly fewer than $\gamma^{-1}$ hypotheses. However,  this will  be rare in practice with discrete data.
%if one applies the methods to discrete data and maxT rejects fewer than $\gamma^{-1}$ hypotheses, then the new method will reject the same in most cases
%\end{remark}

\section{Closed testing and monotonicity} \label{appct}

Here we discuss the new methods in the context of \citt{goeman2021only}, which provides general theory on multiple testing procedures that make confidence statements on false discovery proportions.
Corollary \ref{sameglobaltest} states that the new method and maxT essentially imply the same global test. This is interesting, because in the sense of Theorem 1 of \citt{goeman2021only}, this would imply that the closed testing procedure corresponding to the new method is the same as the closed testing procedure corresponding to maxT --- which is in fact the sequential maxT procedure itself. Indeed, Theorem 1 of \citt{goeman2021only} states that if a procedure satisfies the so-called \emph{monotonicity} assumption  \citp[][p.1225]{goeman2021only}, then the global test of the procedure implies a corresponding closed testing procedure with certain good properties. However, while most multiple testing methods have the monotonicity property, this is not the case for all FDX methods. In particular, the FDX methods in  \citt{korn2004controlling} and \citt{romano2007control}, as well as the new method, are not generally monotone. For the new method, we illustrate this with a small example below. 
Thus, Theorem 1 of \citt{goeman2021only} does apply to the new method.
In particular, it can easily be seen that the new method provides FDP statements that are not implied by the sequential maxT method. Here, it is important to note that maxT is a \emph{consonant} \citp[][p.1228]{goeman2021only} multiple testing method, which means that it cannot be expanded to make FDP statements except those already implied by its FWER rejections. 

We now provide the counterexample which illustrates that the new method is not monotone. 

\begin{example}[New method is not generally monotone]
For the definition of monotonicity, we refer to \citt{goeman2021only}.
To prove that the new method is not generally monotone, it suffices to give a counterexample where adding a single hypothesis to the problem decreases some of the FDP bounds for the original hypotheses considered.

As a minimal example, consider 6 hypotheses and suppose  that for every hypothesis $\Hy_i$ we have a single observation $X_i$, so the data are $X=(X_1,...X_6)$.. We use sign-flipping \citp{hemerik2020robust}, so $\G=\{id, g\}$ has cardinality 2. Here, $id$ is the identity map and $g$ multiplies the entire data vector by $-1$. Each observation  serves as its own statistic, i.e. we let $T_i(X)=X_i$, so that we we always have $X_i=T_i(X)=T_i(idX)=-T_i(gX).$
Take $\alpha=\gamma=0.5$.
Suppose $0<X_1<X_2<X_3<X_4<T_5(gX)=-X_5<X_6$.
Then $q=\min\{s_{id},s_g\}=s_g=T_5(gX)$, so $\R=\{6\}$, i.e., we only reject one hypothesis. Hence, for example, among $\{1,2,3,4,6\}$, we know that at least one hypothesis is false.

Now we add a 7th hypothesis $\Hy_7$ with $X_7>X_6$. 
Then  $q=s_g\leq 0$, so $\R=\{1,2,3,4,6,7\}$ i.e., we reject six hypotheses --- all except 5. With $(1-\alpha)100\%$ confidence, we know that at least $(1-\gamma)\cdot 6=3$ of these hypotheses are false. Hence, the method tells us that among $\{1,2,3,4,6\}$,  at least two are false. We see that adding the extra hypothesis to the problem, allowed us to say more about the original six hypotheses.
\end{example}

\section{Data analysis: aggregated music review scores} \label{secdataan}
As a simple illustration of the new method, we applied it to a dataset containing aggregated music review scores \citep{musicdata}. 
The dataset contains aggregated album review scores from the  websites \emph{Album of the Year} and \emph{Metacritic}.
We only used the data collected from Album of the Year. %The data contain aggregated review scores of ... music albums of ... artists. 
There are critic scores and user scores; for this illustration, we analysed the user scores. The scores were integers between 0 and 100, so that  they are essentially continuous.
The most recent albums in the data are from the year  2020.
We only selected albums (including EPs) that were released in 1980 or later. We further reduced the data by only selecting artists with at least 10 albums since 1980. This resulted in a dataset containing information on 260 artists, including many  prominent artists from the most popular genres, such as pop (Mariah Carey, Robbie Williams), rock (AC/DC, Pearl jam) and hip-hop (Eminem, Snoop Dogg). For every artist, we only used the data on the first 10 albums since 1980.

Roughly speaking, we are  interested in whether scores of consecutive albums by an artist are simply i.i.d. --- or at least exchangeable --- or whether that is not the case, perhaps due to some trend. We will assume throughout that observations from different artists are independent of each other.
As a first simple analysis, for each of the 260 artists, we computed the Pearson correlation of the album scores with their corresponding release years. We call these correlations $T_1,...,T_{260}$.
The mean of these 260 correlations was $-0.26.$ Heuristically, this suggests that there is an overall negative trend. 
To test whether $T_1,...,T_{260}$ all have mean 0,
we applied a two-sided one-sample t-test to these 260 values, which gave a p-value below $10^{-15}$. %of $2.2\cdot 10^{-16}$. 

\subsection{Familywise error rate control}

Whether $\mathbb{E}(T_i)$ has a negative expectation, might depend on the artist. For example, $\mathbb{E}(T_i)$ might be influenced by the genre or home country of the artist.
Hence, we may ask for each artist separately whether their test statistic has mean 0. 
More precisely, we consider hypotheses $\Hy_1,...,\Hy_{260}$, where $\Hy_i$ is the null hypotheses that $\mathbb{E}(T_i)=0$, $1\leq i \leq m$. 
One possible multiple testing approach that controls the familywise error rate, is to  perform individual permutation tests and then apply Bonferroni-Holm to the resulting p-values.
Instead we used the maxT method, based on the absolute correlations $|T_1|,...,|T_m|$. 
The reason is that we expect maxT to have better power than Bonferroni-Holm, as discussed below. We also provide results for Bonferroni-Holm for comparison.

For maxT to have any power, it suffices to use $\lceil \alpha^{-1}\rceil$ permutations. However, to enhance stability, we used $10^4$ permutations.
We randomly permuted the ten order labels $10^4$ times --- including the original order. There are $|G|=10!=3628800$ possible permutations; the reason for using  random permutations was of course the reduction in computation time. The maxT method is valid if Assumption \ref{assjointd} holds. This property immediately follows here, from the assumption of independence between artists in combination with the  definition of our null hypotheses. 
MaxT likely has somewhat better power than Bonferroni-Holm. Indeed, even if the artists are independent, there may still be trends present --- indeed, this is what we wish to test. For example, the first albums of artists may score high on average. When such patterns are present, maxT tends to have better power than Bonferroni-Holm.

The  sequential maxT method rejected 3 hypotheses for $\alpha=0.05$ and 8 hypotheses for $\alpha=0.1$. Bonferroni-Holm based on permutaton tests  rejected only 1 hypothesis for $\alpha=0.05$ and 3 for $\alpha=0.1$.
As an illustration, we consider $\alpha=0.1$. The  sequential maxT method  rejected  the hypotheses corresponding to the artists Pet Shop Boys,    Pearl Jam,   Korn,  Pixies, Van Morrison,   Sting, The Mountain Goats and The Flaming Lips. Six of these artists had a strongly negative test statistic, indicating a clear decline in review scores. Two of these artists, namely the last two mentioned, had a strongly positive test statistic.

\subsection{Multi-resolution FDX control}
Finally, we applied the simultaneous  FDX method from Section \ref{secsingle}. Note that the (theoretically) required Assumption \ref{assjointdextra} is satisfied here, if the artists are independent of each other.
The method rejected 8 hypotheses for $\gamma=0.1$, 22 hypotheses for $\gamma=0.2$, 48  for $\gamma=0.3$ and 95 for $\gamma=0.5$. For example,  in case we had taken $\gamma=0.5$, we would know that with $90\%$ confidence, at least $50\%$ of those 95 hypotheses are false. 
Note that the new method allows ``zooming in''. For example, suppose we had chosen $\gamma=0.2$; then we do not only know with $90\%$ confidence that among the top 22 hypotheses, at most $\gamma100\%=20\%$ are true. Indeed, simultaneously, we know that among the top 9 hypotheses, at most $20\%$, so 1 out of 10, are true. Moreover, we simultaneously know that among the top 4 hypotheses, at most $20\%$ are true, which means that all 4 are false. With $90\%$ confidence, all those statements are simultaneously true, assuming we  chose $\gamma=0.2$ beforehand.

Among the 95 hypotheses that we reject for $\gamma=0.5$,
81 had a strongly negative test statistic, the other 14 a strongly positive statistic.
All in all, the results suggest that not all artists have i.i.d. album scores. Heuristically speaking, 
negative trends seem more common than positive trends, although both seem to exist. There are probably multiple reasons for the overall negative trend.
One reason might be a selection bias: an artist who has a  longstanding  career, probaby had early albums that were appreciated by the public, encouraging the artist and their music label to continue  releasing music. Another reason might be that reviewers tend to rate older albums higher.

\section{Proofs of results} \label{appproofs}

\subsection{Proof of Theorem \ref{thmmain}}

\begin{proof}
We start with an overview of the proof.
\\
\\
\emph{Overview of the proof.}\\
The main strategy of the proof is to define two additional data-dependent thresholds $q^*=q^*(X)$ and $q'=q'(X)$. Computing these thresholds requires knowledge of $\N$, so they cannot be used in practice.
We show that under an event $\E$ with $\pr(\E)\geq 1-\alpha$, we have  
\begin{equation} \label{propertyqstar}
\forall t\geq q^*:  FDP(t)\leq \gamma.
\end{equation}
Further, we show that
\begin{equation} \label{eqorderingqs}
q^*  \,{\buildrel \E \over \leq}\, q' \leq q, 
\end{equation}
where ``$\,{\buildrel \E \over \leq}\,$'' means that ``$\leq$'' holds under $\E$.
Thus, $\E$ does not only imply \eqref{propertyqstar} but also that $q^*\leq q$.
It follows that under $\E$,
$$\forall t\geq q:  FDP(t)\leq \gamma.$$
Since $\pr(\E)\geq 1-\alpha$, this implies the theorem.

In step 1 we will define $q'$ and note that $q'\leq q$. In step 2 we define $q^*$ and $\E$ and show that $\pr(\E)\geq 1-\alpha$. In step 3, we show that \eqref{eqorderingqs} holds, which implies the theorem. Step 4 is essentially a lemma used in step 3. 
%We could have formulated Step 4 as an actual lemma, but this would have been less practical.
\\
\\
\emph{Step 1.}
For $t\in \reals$ define 
$R_{\F}(t,X)=   |\F \cap     \mathcal{R}(t,X)|.$  
For every $g\in \G$, let 
$$s_g'(X) =\sup\Big\{t\in\reals:  \frac{V(t,gX)}{R(t,X)\vee 1}>\gamma\Big\}=
$$
$$   \sup\Big\{t\in\reals:  \frac{V(t,gX)}{(V(t,X)+ R_{\F}(t,X))\vee 1}>\gamma\Big\}=   $$
$$   \sup\Big\{t\in\reals:   V(t,gX)-\gamma V(t,X)     >\gamma  R_{\F}(t,X)\Big\},$$
where the supremum of an empty set is $-\infty$.
Let $q'=q'(X)$ be the $(1-\alpha)$-quantile of the values $s_g'$, $g\in \G$.
Note that $q'\leq q$, since for every $g\in \G$, $s_g'\leq s_g$. In the remainder of the proof, we will show that the threshold $q'$ is valid in the sense that Theorem \ref{thmmain} holds if we replace $q$ by $q'$ there. This immediately implies that the threshold $q$ is also valid, which will finish the proof.
\\
\\
\emph{Step 2.}
For every $g\in \G$, let 
$$s^*_g(X) =  \sup\Big\{t\in\reals:  \frac{V(t,gX)}{(V(t,gX)+ R_{\F}(t,X))\vee 1}>\gamma\Big\}= $$
$$  \sup\Big\{t\in\reals:  V(t,gX)   -\gamma  V(t,gX)         >\gamma  R_{\F}(t,X)   \Big\}.$$
Note the subtle difference compared to the definition of $s_g'(X)$.
Let $q^*=q^*(X)$ be the  $(1-\alpha)$-quantile of the values $s^*_g$, $g\in \G$.

Note that due to right-continuity of $V(\cdot , gX)$, for at least $(1-\alpha)100\%$ of the transformations $g\in \G$, we have that
$$\forall t\geq  q^*(X):  V(t,gX)   -\gamma  V(t,gX)         \leq \gamma  R_{\F}(t,X).$$

Define the event 
$$   \E =\Big\{ \forall t\geq q^*:  V(t,X)   -\gamma  V(t,X)         \leq \gamma  R_{\F}(t,X)       \Big\}=\Big\{ \forall t\geq q^*:  FDP(t,X)\leq \gamma      \Big\}.$$
Note here that $R_{\F}(t,X)$ is function of  $X_{\F}$ and  the values $V(\cdot,gX)$, $g\in\G$, are functions of  $X_{\N}$.

%Consider the function $g_{\N}: \X\rightarrow \X$ defined by $(X^1,...,X^m) \mapsto (g_{\N}^1X^1,...,g_{\N}^mX^m)$, where $g_{\N}^iX^i = g^i X^i$ if $i\in \N$ and $g_{\N}^iX^i = X^i$ otherwise. 
Conditional on $(T_i(X):i\in \F)$,  $(s^*_g:\text{ }g\in \G)$ is a function of the test statistics corresponding to $\N$ and is hence
 invariant under all transformations $g\in \G$, by Assumption \ref{assjointd}.
Hence, conditional on $(T_i(X):i\in \F)$, by the group invariance principle \citp{hemerik2018exact,hemerik2018false}, 
$$\pr(s^*_{id}\leq q^*)\geq 1-\alpha.$$
%with probability at least $1-\alpha$, $s^*_{id}$ does not exceed $q^*$. 
Note that 
$$s^*_{id}\leq q^* \Longleftrightarrow \Big\{ \forall t\geq q^*:  FDP(t,X)\leq \gamma      \Big\}\Longleftrightarrow \E.$$
This means that $$\pr(\E)\geq 1-\alpha.$$ This is true conditionally, hence also marginally.

Thus, if we use $q^*$ as the threshold, then we have FDX control.
However, we do not know $q^*$ in practice. We will show though that if $\E$ holds, then $q'\geq q^*$ and hence $FDP(t,X)\leq \gamma$ for all $t\geq q'$. This is the aim of the remainder of the proof.
\\
\\
\emph{Step 3.}
%Note that by construction of $q^*$,
%$|\{g\in \G: s_g^*  \geq  q^*  \}| \geq \alpha d$.
Suppose $\E$ holds. Consider any $g\in \G$ for which $s'_g< q^*$. 
In Step 4 below, we will show that under $\E$ it then holds that $s_g^*<q^*$ with probability 1. Taking that for granted now, suppose $q'<q^*$.
Then
$$(1-\alpha)d \leq |\{g\in\G: s_g'\leq q'\}|\leq |\{g\in\G: s_g'< q^*\}| \leq |\{g\in\G: s_g^*< q^*\}| ,$$
which is a contradiction with the fact that $|\{g\in\G: s_g^*< q^*\}|<(1-\alpha)d$. It follows  that  $q'\geq q^*$  under $\E$ as we wanted to show.

Thus, what remains to show is that under $\E$,
if $g\in \G$ is such that  $s_g'< q^*$, then $s_g^* < q^*$ with probability 1. For the rest of the proof, suppose $\E$ holds.
\\
\\
\emph{Step 4: Showing that under $\E$, $s_g' < q^*$ implies  $s_g^* < q^*$ with probability 1.} \\
We will show that $s^*_g\geq q^*$ implies   $s_g' \geq q^*$ with probability 1, which gives the result.

First consider the case that the inequality is strict, i.e., $s^*_g > q^*$.
Then 
there is a $t\geq q^* $ for which 
\begin{equation} \label{eq1}
V(t,gX)   -\gamma  V(t,gX)         >\gamma  R_{\F}(t,X). 
\end{equation}
Consider any such $t$.
Since we assumed $\E$, we  have  that $$V(t,X)   -\gamma  V(t,X) \leq  \gamma  R_{\F}(t,X)   <  V(t,gX)   -\gamma  V(t,gX) $$ and hence $V(t,X) < V(t,gX)$, so that 
\begin{equation} \label{eq2}
    V(t,gX)   -\gamma  V(t,X)    \geq V(t,gX)   -\gamma  V(t,gX).
  \end{equation}
By inequalities \eqref{eq1} and  \eqref{eq2},
$$    V(t,gX)   -\gamma  V(t,X)    >\gamma  R_{\F}(t,X)  $$ 
so that $s_g'>t\geq q^*$.

Finally, consider the case that $s^*_g = q^*$. We must show that $s_g' \geq q^*$. %If $q^*=0$ then obviously $s_g'\geq q^*$, so suppose that $q^*>0$.
Note that if $g=id$, then obviously $s^*_g=s_g'$, so that $s_g'\geq q^*$. Now suppose $g\neq id$. 
That means that at $s^*_g=q^*$, $V(\cdot,gX)$ makes a jump downwards.  
Thus, for all sufficiently small $\eps>0$,
\begin{equation} \label{eq3}
V(q^*-\eps,gX)   -\gamma  V(q^*-\eps,gX)         >\gamma  R_{\F}(q^*-\eps,X).
\end{equation} 
Due to Assumption \ref{asscont},
with probability 1, in a neigbourhood of $q^*$, $V(\cdot,X)$  
has no discontinuity. Hence, since we supposed $\E$, with probability 1, for all sufficiently small $\eps>0$ we have 
\begin{equation} \label{eq4}
V(q^*-\eps,X)   -\gamma  V(q^*-\eps,X)         \leq \gamma  R_{\F}(q^*-\eps,X).
\end{equation} 
Inequalities \eqref{eq3} and \eqref{eq4}  imply that 
 $$V(q^*-\eps,X) < V(q^*-\eps,gX),$$ so that $$-\gamma  V(q^*-\eps,X)\geq    -\gamma  V(q^*-\eps,gX)  $$ and hence, 
 \begin{equation} \label{eq5}
    V(q^*-\eps,gX)   -\gamma  V(q^*-\eps,X)    \geq V(q^*-\eps,gX)   -\gamma  V(q^*-\eps,gX)  
\end{equation} 
for every sufficiently small $\eps>0$.
Now inequalities \eqref{eq3} and \eqref{eq5}  imply that  with probability 1,
$$    V(q^*-\eps,gX)   -\gamma  V(q^*-\eps,X)    >\gamma  R_{\F}(q^*-\eps,X)   $$
for every sufficiently small $\eps>0$.
Hence, with probability 1, $s_g' \geq q^*$, as we wanted, which finishes the proof.

\end{proof}

\subsection{Proof of Theorem \ref{thmseq}}

\begin{proof}
As in the proof of Theorem \ref{thmmain}, consider the event
$$   \E =\Big\{ \forall t\geq q^*:  FDP(t,X)\leq \gamma      \Big\}.$$
In Theorem \ref{thmmain}, we showed that $\pr(\E)\geq 1-\alpha$ and  that under $\E$, $FDP(t)\leq \gamma$ for all $t\geq q'$. Here, we will show that under $\E$, $ q'\leq q_{\fin}$, so that $FDP(t)\leq \gamma$ for all $t\geq q_{\fin}$.

Suppose $\E$ holds. 
Using induction, we will prove that for all $i\in \mathbb{N}$ we have $q'\leq q_i$.
%$$\forall t\geq q_{i}: FDP(t)\leq \gamma.$$
We know that this holds for $i=0$. Now suppose we have proven this for all indices up to $i \in \mathbb{N}$. We must prove that it also holds for index $i+1$. To this end, note that we know that $FDP(q_{i})\leq \gamma$. This means that in the set $\R(q_i)$, there are at least $\lceil(1-\gamma)R(q_i)\rceil=B_{i+1}$ false hypotheses.
 
 This means that there exists a set  $\I\subseteq\{1,...,m\}$ with $\I^c\subseteq \R(q)$ and $|\I^c| = B_i$ such that  $ \I\supseteq\N$.  %, i.e. $\J\cap \N=\N$.
 For such a  set $\I$, for every $g\in \G$ we have 
 $$|\I\cap \R(t,gX)| \geq |\N\cap \R(t,gX)|= V(t,gX) $$
  and consequently  
 $s^{\I}_{g,i+1} \geq s_g'$.
 But this means that $q'\leq q^{\I}_{i+1}$, which implies that $q'\leq q_{i+1}$ by definition of $q_{i+1}$.
 % In the proof of Theorem \ref{thmmain} we showed that $q'\geq q^*$ under $E$. Hence, $q_{i+1}\geq q^*$ under $E$. Thus, under $E$, $$\forall t\geq q_{i+1}: FDP(t,X)\leq \gamma.$$ 
 It follows by induction that this holds for all $i\in \mathbb{N}$.
  
%  The threshold  $q_{i+1}$ is based on the information that $B_{i+1}$ is a $(1-\alpha)100\%$ confidence lower bound for the number of false hypotheses in $\R(q_i)$. 
  The sequence $B_1\leq B_2\leq ...$ is a discrete, bounded sequence, so that it converges after finite number of steps. Hence, the sequence  $q_0,  q_1,  q_2,...$ also converges after a finite number of steps. We conclude that under $\E$, $ q'\leq q_{\fin}$, so that under $\E$, $FDP(t)\leq \gamma$ for all $t\geq q_{\fin}$.
\end{proof}

\subsection{Proof of Proposition \ref{propasympt}}

\begin{proof}
In the proof of Theorem \ref{thmmain}, we only used Assumption \ref{assjointdextra} in Step 2. This assumption guaranteed that the quantities $V(t,gX)$, $g\in \G$, were independent of $R_{\F}(t,X)$. Pick $0<\epsilon<1$. Then, by our asymptotic assumption, we can choose $N$ such that for all $n\geq N$, with probability at least $1-\epsilon$, 
$$  \max\big \{T_i(gX_n): i\in \N,g\in \G'\big \}< \min\big \{T_i(X_n): i\in \F\big \} .$$
Hence, for all $n\geq N$, for all $g\in\G_n'$,  with probability at least $1-\epsilon$,  $s^*_{g}(X) < c$ and $s'_{g}(X) < c$ and hence,   with probability at least $1-\epsilon$, the following equalities hold:
$$s^*_{g}(X_n) = \tilde{s}^*_{g}(X_n) := \sup\Big\{t\in\reals:  \frac{V(t,gX_n)}{(V(t,gX_n)+ |\F|)\vee 1}>\gamma\Big\},$$
$$s'_{g}(X_n) = \tilde{s}'_{g}(X_n) := \sup\Big\{t\in\reals:  \frac{V(t,gX_n)}{(V(t,X_n)+ |\F|)\vee 1}>\gamma\Big\}.$$
%which not longer depends on the statistics $T_i(X_n)$ with $i\in \F$.
Thus, as $n\rightarrow\infty$, 
\begin{equation} \label{eqto1a}
\pr\big\{\forall g'\in \G': s'_{g}(X_n) =  \tilde{s}'_{g}(X_n)\big \}   \rightarrow 1
\end{equation}
%\sup\Big\{t\in\reals:  \frac{V(t,gX_n)}{(V(t,gX_n)+ |\F|)\vee 1}>\gamma\Big\} \,{\buildrel d \over \rightarrow}\, 0.$$
and hence 
\begin{equation} \label{eqto1b}
\pr\big\{ q'(X_n) =  \tilde{q}'(X_n)\big \}  \rightarrow 1.
 \end{equation}
where $\tilde{q}' (X_n)$ is the $(1-\alpha)$-quantile of the test statistics $\tilde{s}'_g(X_n)$, $g\in \G'$.

Note that  $\tilde{s}^*_{g}(X_n)$ does not depend  on the statistics $T_i(X_n)$ with $i\in \F$.
Hence,  with Assumption \ref{assjointd} but without Assumption \ref{assjointdextra}, it follows as in  step 2 the proof of Theorem \ref{thmmain} that regardless of $n$,
$$\pr\big\{\tilde{s}^*_{id}(X_n)\leq \tilde{q}^*(X_n)\big \}\geq 1-\alpha$$
and as in step 3 we find that consequently
\begin{equation} \label{eqqtilde}
\pr\big\{\tilde{s}'_{id}(X_n)\leq \tilde{q}'(X_n)\big \}\geq 1-\alpha,
 \end{equation}
where $\tilde{q}'(X_n)$ is the $(1-\alpha)$-quantile of the values $\tilde{s}'_{g}(X_n)$, $g\in\G_n'$.
% defined by 
%$$\tilde{s}_{g}'(X_n) := \sup\Big\{t\in\reals:  \frac{V(t,gX_n)}{(V(t,X_n)+ |\F|)\vee 1}>\gamma\Big\}.$$

Combining \eqref{eqqtilde} with  \eqref{eqto1a} and \eqref{eqto1b} gives
$$\liminf_{n\rightarrow\infty}  \pr\big\{{s}'_{id}(X_n)\leq {q}'(X_n)\big \}\geq 1-\alpha.$$
Note that ${s}'_{id}(X_n) = \sup\big\{t\in\reals:  FDP(t,X_n)>\gamma\big\}$, so the above translates to
%\begin{equation} \label{eqqprimeasval}
$$\liminf_{n\rightarrow\infty}  \pr\big\{\forall t\geq q'(X_n): FDP(t,X_n)\leq \gamma    \big \}\geq 1-\alpha.$$
 %\end{equation}
 Since $q'(X_n) \leq q(X_n)$ by definition, the result follows for the single-step method.

Regarding the sequential method, recall that  $q'(X_n) \leq q_{\fin}(X_n)$, so this result also immediately follows.
\end{proof}

\subsection{Proof of Theorem \ref{thmgamma0}}

\begin{proof}
We first prove the result for the single-step method.
Since $\gamma=0$,  for every $g\in\G$ we have
\begin{align*}
s_{g}  
&=\sup\Big\{t\in\mathbb{R}: \frac{R(t,gX)}{R(t,X)\vee 1}>0\Big\} \\
&= \sup\Big\{t\in\mathbb{R}: R(t,gX)>0\Big\}\\
&= \max_{1\leq i \leq m}T_i(gX).
\end{align*}
Our threshold $q$ is the $(1-\alpha)$-quantile of the values $s_{g}$, $g\in \G$.
The threshold $Q_0$ is the $(1-\alpha)$-quantile of the values $\max_{1\leq i \leq m}T_i(gX)$, $g\in \G$.
Hence, $q=Q_0$. Thus, when $\gamma=0$, our single-step FDX method rejects the same  hypotheses as the single-step maxT method.

We now prove the result for the sequential method. 
We will show that $q_{j}=Q_j$ for every $j\geq 0$, which implies $q_{\fin}=Q_{\fin}$, which means the methods are the same.
We proceed by induction. Let $j\geq 0$ and suppose we have proven that 
$q_{j}=Q_j$. We will show that $q_{j+1}=Q_{j+1}$.
To do this, note that 
$B_{j+1}=\lceil (1-\gamma)R(q_j)\rceil = R(q_{j})$, so that $\K_{j+1}$ contains a single element, namely  $\A_{j+1}:=\R^c(q_{j})$, the complement of $\R(q_{j})$.
Consequently, %$q_1$ is simply the $(1-\alpha)$-quantile of the values
for every $g\in \G$,
\begin{align*}
s_{g,{j+1}}^{\A_{j+1}}   
&=\sup\Big\{t\in\reals:  \frac{|\A_{j+1} \cap \R(t,gX)|}{R(t,X)\vee 1}>\gamma\Big\} \\
&= \sup\Big\{t\in\reals:  \frac{|\A_{j+1} \cap \R(t,gX)|}{R(t,X)\vee 1}>0\Big\} \\
&= \sup\Big\{t\in\reals:  |\A_{j+1} \cap \R(t,gX)|>0\Big\} \\
&= \max_{i\in \A_{j+1}} T_i(gX).
\end{align*}
Recall that $q_{j+1}$ is the $(1-\alpha)$-quantile of the values $s_{g,{j+1}}^{\A_{1}} $, $g\in \G$. Further, $Q_{j+1}$ is the $(1-\alpha)$-quantile of the values  $\max_{i\in \A_{j+1}} T_i(gX)$, $g\in \G$. Hence, $q_{j+1} = Q_{j+1}$, which finishes the proof by induction.
\end{proof}

\subsection{Proof of Proposition \ref{samewhenfewrej}}

\begin{proof}
We first prove the result for the single-step methods. We then continue with a proof for the sequential methods.
\\
\\
\emph{Proof for the single-step methods.}
By Theorem \ref{thmgamma0}, if $\gamma=0$, then our method is equivalent to maxT. It is straightforward to show that increasing $\gamma$ can only lead to more rejections. More precisely, $q\leq Q_0$.

It is left to show that if $\gamma\in(0,1)$ and single-step maxT rejects strictly fewer than $\gamma^{-1}$ hypotheses, then $Q_0\leq q$ with probability 1. 
Thus, suppose $R(Q_0)<\gamma^{-1}$.

To prove that $Q_0\leq q$, note that it suffices to show that if $g\in \G$ is such that $ Q_0\leq   \max_{1\leq i \leq m}T_i(gX)  $, then $\max_{1\leq i \leq m}T_i(gX)= s_g(X)$. 
Indeed, that will imply that 
$$|\{g\in \G:  Q_0\leq   \max_{1\leq i \leq m}T_i(gX)  \}| \leq   |\{g\in \G:  Q_0\leq   s_g(X)  \}| ,$$
which means that
$$|\{g\in \G:  s_g(X)< Q_0  \}|     \leq |\{g\in \G:   \max_{1\leq i \leq m}T_i(gX)  < Q_0  \}|  < (1-\alpha)d,  $$
so that $q$ cannot be smaller than $Q_0$.
Thus, consider a $g\in \G$ for which  $ Q_0\leq \max_{1\leq i \leq m}T_i(gX).$

First suppose $g=id$. 
Then $$s_g= \sup \{t\in \reals: R(t,X)>\gamma (R(t,X)\vee 1)\} = \sup \{t\in \reals: R(t,X)>0\}=$$
$$ \max_{1\leq i \leq m}T_i(X)=  \max_{1\leq i \leq m}T_i(gX).$$
as we wanted.

Now suppose $g\neq id$.
By Assumption \ref{asscont}, since $R(\cdot,gX)$ has a discontinuity at $\max_{1\leq i \leq m}T_i(gX)$, it follows that  with probability 1, $R(\cdot,X)$ has no discontinuity there.
Note that 
\begin{equation} \label{eql1}
\max_{1\leq i \leq m}T_i(gX) = \sup\Big\{t\in\reals: R(t,gX)>0\Big\}.
\end{equation}
For all $t\geq \max_{1\leq i \leq m}T_i(gX)$, we have $t\geq Q_0$ and hence $\gamma\cdot (R(t,X)\vee 1)<1$ by the assumption that maxT rejects fewer than $\gamma^{-1}$ hypotheses. Since with probability 1, $R(t,X)$ has no discontinuity in a neighbourhood of $\max_{1\leq i \leq m}T_i(gX)$, there is an $\epsilon>0$ such that for all $t\geq \max_{1\leq i \leq m}T_i(gX) -\epsilon$, $R(t,X)\vee 1=0.$
It follows that with probability 1, \eqref{eql1}  equals 
  $$\sup\Big\{t\in\reals:  R(t,gX)>\gamma\cdot(R(t,X)\vee 1)\Big\}=$$
  $$\sup\Big\{t\in\reals: \frac{R(t,gX)}{R(t,X)\vee 1}>\gamma\Big\}= s_{g}(X),$$
  as we wanted. This finishes the proof that $q=Q_0$ with probability 1.
\\
\\
\emph{Proof for the sequential methods.}
For $\gamma=0$ we have $q_{\fin}=Q_{\fin}$, by the proof of Theorem \ref{thmgamma0}. The threshold $q_{\fin}$ is nonincreasing in $\gamma$, so in general $q_{\fin}\leq Q_{\fin}$. It is left to show that if $R(Q_{\fin})<\gamma^{-1}$, then  $Q_{\fin}\leq q_{\fin}$ with probability 1. 

We have
$0\leq \gamma R(q) < 1$ and consequently   
\begin{equation}  \label{gammaRsmall}
\lceil -\gamma R(q)\rceil =0.
\end{equation}
Recall the notation from Theorem \ref{thmseq}. We have

$$B_1= \lceil (1-\gamma) R(q)\rceil  = \lceil  R(q)  + (-\gamma R(q)) \rceil. $$
Since $R(q)$ is an integer, this equals
$$  R(q)  + \lceil -\gamma R(q)\rceil = R(q)+0$$
 by \eqref{gammaRsmall}. Thus, $B_1$ is simply $R(q)$. It follows that $\K_1$ only contains one element, namely $\A_1=\R^c(q)$.
 
For $\gamma=0$ we have $q_1= Q_1$, by the proof of Theorem \ref{thmgamma0}. For larger $\gamma$, we have $q_1\leq Q_1$. %Hence, $R(Q_1)\leq R(q_1)\leq R(q)<\gamma^{-1}$.
Now we will show that $Q_1\leq q_1$. %, which will mean that  $q_1= Q_1$.
To do this, we will show that for every $g\in \G$,
if $\max_{i\in\A_1}T_i(gX)\geq Q_1$, then $\max_{i\in\A_1}T_i(gX) =s_{g,1}^{\A_1}$. It then follows as in the above proof for the single-step case, that  $Q_1\leq q_1$.

Thus, pick a $g\in \G$ for which $\max_{i\in\A_1}T_i(gX)\geq Q_1$.
First consider the case $g=id$. 
By assumption, the sequential maxT method rejects fewer than $\gamma^{-1}$ hypotheses, which implies that 
for all $t\geq Q_1$, $R(t,X) < \gamma^{-1}$. Hence,  for all $t\geq \max_{i\in\A_1}T_i(X)$, $R(t,X) < \gamma^{-1}$. 
By Assumption \ref{asscont}, with probability 1, there are no ties among the test statistics.
Hence, with probability 1, there is an $\epsilon>0$ such that for all $t\geq \max_{i\in\A_1}T_i(X)-\epsilon$, $R(t,X)< \gamma^{-1}$, i.e., $0< \gamma\cdot (R(t,X)\vee 1)<1$.
Consequently, with probability 1,  

\begin{align*}
     \max_{i\in\A_1}T_i(gX) 
        &=  \sup\Big\{t\in\reals:  |\A_{1} \cap \R(t,X)|>0\Big\} \\
        &=  \sup\Big\{t\in\reals:  |\A_{1} \cap \R(t,X)|>\gamma\cdot (R(t,X)\vee 1)\Big\} \\
        &= \sup\Big\{t\in\reals:  \frac{|\A_{1} \cap \R(t,X)|}{R(t,X)\vee 1}>\gamma\Big\} = s_{g,1}^{\A_1} 
\end{align*}
as we wanted.

Now suppose  $g\neq id$.
Then 
$$\max_{i\in\A_1}T_i(gX) =$$
  $$\sup\Big\{t\in\reals:  |\A_{1} \cap \R(t,gX)|>0\Big\}.$$
  For all $t\geq \max_{i\in\A_1}T_i(gX)$, we have $t\geq Q_1\geq Q_{\fin}$. Since by assumption $R(Q_{\fin})<\gamma^{-1}$, it follows that  $0<\gamma\cdot (R(t,X)\vee 1)<1.$
  Like above, note that with probability 1, $R(\cdot,X)$ has no discontinuity at $\max_{i\in\A_1}T_i(gX)$, so there is an $\epsilon>0$ such that for all $t\geq \max_{i\in\A_1}T_i(gX) -\epsilon$, $\gamma\cdot (R(t,X)\vee 1)=0.$  
   Hence with probability 1 the above equals
  $$\sup\Big\{t\in\reals:  |\A_{1} \cap \R(t,gX)|>\gamma(R(t,X)\vee 1)\Big\}=$$
  $$\sup\Big\{t\in\reals:  \frac{|\A_{1} \cap \R(t,gX)|}{R(t,X)\vee 1}>\gamma\Big\}= s_{g,1}^{\A_1}$$
  as we wanted to show. We conclude that $Q_1\leq  q_1$, so $Q_1= q_1$ with probability 1. Showing that $Q_2\leq  q_2$ with probability 1 can be done analogously, so that we find  $q_2= Q_2$ with probability 1. Continuing like this, we find that $q_{\fin}=Q_{\fin}$ with probability 1 as desired.
\end{proof}

\end{document}